\def\a{\alpha}
\def\b{\beta}
\def\g{\gamma}
\def\={\overset{\bm .}{=}}
\def\e{\epsilon}
\def\B{\mathcal{B}}
\def\sff{\mathrm{I\!I}}
\def\p{\partial}
\def\J{\mathcal{v}}
\theoremstyle{plain}
\newtheorem{thm}{Theorem}
\newtheorem{cor}{Corollary}[thm]
\newtheorem{prop}{Proposition}
\theoremstyle{definition}
\newtheorem{remark}{Remark}
\newtheorem{dfn}{Definition}
\newtheorem{ex}{Example}
\newtheorem{thm}{Theorem}
\newtheorem{cor}{Corollary}[theorem]
\newtheorem{dfn}{Definition}
\newtheorem{remark}[lem]{Remark}
\newtheorem{ex}{Example}
\newtheorem{theorem}{Theorem}
\begin{document}

\title{Obstructions  for trapped submanifolds}

\author{Gustavo Dotti}

\affiliation{FaMAF,  
Universidad Nacional de C\' ordoba  
and IFEG, CONICET.\\ Ciudad Universitaria, (5000) C\'ordoba, Argentina.}
\email{gustavo.dotti@unc.edu.ar}

\begin{abstract} 
We introduce the concept of $k-$future convex spacelike/null hypersurface $\Sigma$ in an  
$n+1$ dimensional spacetime $M$ and prove that 
no  $k-$dimensional closed trapped submanifold  (k-CTM) can be tangent to $\Sigma$ 
from its future side. As a consequence, k-CTMs  cannot be found in 
 open spacetime  regions  foliated by 
such hypersurfaces. In gravitational collapse 
scenarios, specific hypersurfaces of this kind  act as past barriers for trapped submanifolds. 
A number of examples are worked out in detail, two of them   showing 
3+1 spacetime regions  containing trapped loops ($k=1$) but no closed trapped surfaces ($k=2$). 
The use of trapped loops as an early  indicator of black hole formation  is  briefly discussed.\\
\end{abstract}

\maketitle

\tableofcontents{}

\section{Introduction} \label{Sintro}
The standard, textbook definition of a black hole,  assumes that the spacetime $(M,g_{ab})$  has a single asymptotically flat 
end and a region $\mathcal{B}$ that is causally disconnected from future null infinity $\mathcal{I}^+$. The 
black hole 
region is 
\begin{equation} \label{bhr}
\mathcal{B}= M - J^- (\mathcal{I}^+)
\end{equation}
 and  the 
\textit{event horizon} is 
the null hypersurface $\p \mathcal{B}$.  The difficulties associated to this definition have been discussed extensively, 
the main one being that knowledge of the entire spacetime is required to spot the black hole region.
This makes phrases like “the black hole at the center of our galaxy” not make rigorous sense, since we are merely \textit{assuming} there is 
a confined region $\mathcal{B}$ from where light rays will never reach the domain of outer communications 
$ J^- (\mathcal{I}^+)$. The way to circumvent this problem is \textit{characterizing} the black hole region $\mathcal{B}$, 
finding signatures that reveal whether or not an open subset of the spacetime is included  in this  region. 
A \textit{local} characterization  by fields made out from the metric is attempted in \cite{Page,Coley} by introducing the   notion 
of \textit{geometric horizon}. The fact that a stationary black hole horizon is a Killing horizon and that 
\textit{apparent horizons} in spherically symmetric black holes have a higher specialization of its algebraic character (when compared to the bulk) is used 
in this approach. For general situations, however, the possibility of finding  the \textit{event horizon} this way should be discarded 
since we know that, in the case of Vaidya spacetime with an incoming null flow, $\mathcal{B}$ 
extends to the past into flat regions of $M$ \cite{bd}, proving  that no curvature related quantity can be associated to black hole interiors. 
 Tracing the boundaries of  $\mathcal{B}$ without using \eqref{bhr} requires searching  for 
\textit{quasi local} black hole interior signatures: extended objects having particular properties whose determination does not 
require knowing the \textit{entire} spacetime. The paradigm of these extended objects are \text{closed trapped surfaces} (CTSs). Their relevance 
was discovered by 
 Roger Penrose in his fundamental paper 
\cite{Penrose}, where it is proved that, if $S$ is
a CTS in a $3+1$ spacetime $M$ that has  a non-compact Cauchy 
surface and satisfies  
the null energy condition:
\begin{equation} \label{nec}
R_{ab} N^a N^b \geq 0 \text{ if } N^a \text{ is null,}
\end{equation}
then there are 
future incomplete null geodesics orthogonal  
to $S$, that is, there are spacetime singularities. 
A warning on the use of the word \textit{closed} in CTS: closed here means that 
$S$ is an ordinary manifold (that is, without boundary) and its is compact. A CTS in a 3+1 spacetime is then 
a compact 2-manifold, the trapping condition means that the mean curvature vector field (MCVF) of $S$ is future timelike. 

The CTS concept 
admits a number of variations among which the most relevant is that of marginally outer trapped surface (MOTS, 
see  section \ref{mcvf}, also reference \cite{GJ}). 
In \cite{bd} is proved that, in the case of a Vaidya black hole, the black hole region  $\mathcal{B}$ defined in \eqref{bhr} 
exactly agrees with the union of MOTS, that is, through every point in $\mathcal{B}$ there passes a MOTS (this was an 
earlier conjecture of Eardley \cite{Eardley}).  MOTS 
play a crucial role in numerical relativity as their time evolution  exhibit  a number of features that make them a reasonable proxy for 
a black hole boundary (see, e.g., \cite{Pook-Kolb,Pook-Kolb2, Booth1,Booth2,Booth3,Andersson:2008up}).  
\textit{Stable} MOTS, as defined  in  \cite{newman,Andersson2005,Andersson:2007}  have a predictable time evolution and locally bound 
CTSs within a spacelike hypersurface \cite{Andersson:2007}. \\ 

In this paper, by a  \textit{spacetime} we mean an $n+1$ dimensional, time oriented Lorentzian manifold $(M^{n+1},g_{ab})$ with  $n \geq 2$;  
by   \textit{submanifold} of $M$ we mean an ordinary (that is, boundary-less) embedded 
 submanifold. 
 We may occasionally use a superscript on the manifold name to indicate its 
dimension, as in  $\Sigma^n \subset M^{n+1}$ for a hypersurface $\Sigma$. 
For the metric we use the mostly plus metric signature. A tangent vector $v^a$ 
is \emph{causal} 
if $v^a \neq 0$ and $v^a v_a \leq 0$, \emph{timelike} if $v^a v_a < 0$. 
A causal/timelike curve $c$ has a causal/timelike tangent vector at every point; in particular, it is 
regular ($c' \neq 0$). Constant curves are therefore not causal according to our definitions.  
The spacetime being  time orientable means that it admits a vector field $o^a$ that is causal 
everywhere. 
This vector field selects at every point \textit{the future  half cone} of casual vectors.
 A  \textit{hypersurface} is  a submanifold of codimension one (we are mostly interest in the cases where this is spacelike, null, or alternates between  these two types); 
 a \textit{surface}  is a  codimension two spacelike submanifold. \\
 
This paper is devoted to the study of Closed Trapped subManifolds (CTM) of any codimension,  in  spacetimes of arbitrary dimensions.  
As explained above, the word \textit{closed} (manifold) is used  following 
standard conventions and  refers to ordinary (that is boundary-less) compact manifolds, so that the letter C in the acronyms above may be just read as 
``compact''.  The trapping conditions means that the MCVF is future timelike everywhere. 
A CTM $S^k \subset M^{n+1}$ is then a compact submanifold with future timelike MCVF. 
A one dimensional CTM  will be called a \emph{trapped loop} (TL). 
 Being  a spacelike  
one dimensional compact submanifold, it is the image of a periodic smooth 
function $c: \mathbb{R} 
\to M$ with $c'$ spacelike at all points (the trapping condition turns out  to be equivalent 
to its acceleration being past timelike at every point). \\

We  alternate between index and index-free notation for tensor fields and 
denote the inner product either as $g_{ab}u^a v^b= u^a v_a$ or 
$\langle u,v \rangle$. The second form has  the advantage of not requiring 
alternative symbols and indexes for the induced metric on submanifolds.
The \emph{norm} of a vector $u$ is $|u|= \sqrt{| \langle u,u \rangle |}$.  
A review of the derivation of the first (volume) variation formula for submanifolds,
 including the definitions of second fundamental form and MCVF  is 
given for completeness in section \ref{mcvf}. The definitions of the trapped submanifolds of interest are 
given in section \ref{SSzoo}. \\

A singularity theorem extending Penrose's to  spacetimes 
containing a CTM of codimension $k\geq 2$ 
 is proved in \cite{Gallo}, Theorem 1. 
 For simplicity, we give  a weaker but more practical (to the purpose  of testing the 
hypothesis) version of this theorem  here:\\

 \noindent
 {\bf Theorem} [Galloway and Senovilla] \cite{Gallo}. \textit{ Assume $(M^{n+1},g_{ab})$ contains a non compact  Cauchy surface and a 
$k-$dimensional  CTM, $k<n$. If the condition
 \begin{equation} \label{xec0}
 R_{abcd} N^a e^b_\a  N^c e^d_\b h^{\a \b} \geq 0
 \end{equation}
 holds at every $p \in M$ for any null vector  $N^a$ and any set of $k$ linearly independent spacelike  fields $e^a_\a$ orthogonal to $N^a$, where $h^{\a \b}$ is the 
 inverse of $g_{ab} e^a_\a e^b_\b$, 
 then $(M,g_{ab})$ is future null geodesically 
 incomplete.}\\
 
Note that  for $k=n-1$,  we can complete the set $\{ e_1^a,...,e_{n-1}^a,N^a \}$ in the above Theorem to a basis of $T_p M$ by adding the 
 null vector $L^a$ orthogonal to the $e^a_\a$'s and satisfying $L^a N_a=-1$. 
 In this case $h^{\a \b} e^a_\a e^b_\b=g^{ab}-N^a L^b - L^a N^b$, 
equation  \eqref{xec0} reduces to \eqref{nec} and we recover Penrose's theorem. \\
 Note also that the condition \eqref{xec0} is satisfied for every $k$ if it holds for $k=1$. This leads us to the following\\
 \noindent
 {\bf Corollary.}   \textit{Assume $(M^{n+1},g_{ab})$ contains a non compact Cauchy surface and a 
 CTM. If the  condition
 \begin{equation} \label{xec}
 R_{abcd} N^a e^b N^c e^d \geq 0
 \end{equation}
 holds at every $p \in M$ for any null vector  $N^a$ and any orthogonal  spacelike vector $e^a$  then $(M,g_{ab})$ is future null geodesically 
 incomplete.}\\

In the case of 3+1 black 
hole spacetimes,  defined as in  equation \eqref{bhr},  CTSs are 
confined within $\B$. This is proved, e.g., in Proposition 12.2.2 in \cite{wald}. 
The proof  assumes the existence of   a non compact Cauchy surface, the energy condition \eqref{nec} and other  technicalities, and can be  
extrapolated  to show that CTMs of codimension $n+1-k \geq 2$ 
  within  n+1 black holes (defined in asymptotically flat n+1 spacetimes as in \eqref{bhr} and 
satisfying analogous conditions, where the analogous of \eqref{nec} is \eqref{xec}), are confined within the black hole region.
 This opens up the possibility of including higher codimension CTMs as 
black hole phenomenology. In particular, in 3+1 dimensions we should consider TLs. 
This is interesting for two reasons: i) from a pure theoretical perspective, since its is well know that in the particularly relevant 
3+1 dimensional case there are  stationary black holes containing no CTSs, an example being extremal 
  Kerr-Newman black holes and the  Kerr and Reissner-Nordström subcases (for a proof of this statement in the extremal Kerr case 
  see example 7 in section \ref{Sapps}) ; ii) from an operational 
perspective, as  in numerical relativity spacetime is never obtained in its full extension, but partially assembled by  piling up Cauchy surfaces, and the 
resulting foliation of (the piece of) the spacetime  obtained in this way may enter black holes but 
elude CTS \cite{wi,Dotti} (see Example \ref{wix} in section \ref{Sapps}). 
The existence of TLs in regions where there are no CTSs is one of the issues dealt with in the Applications section below 
(see examples 3 and 4).

\section{First order variation formula}\label{mcvf}

In this section we derive a formula for the initial rate of variation of the $k-$volume of a compact 
$k-$dimensional spacelike submanifold $S$ of a semi-Riemannian manifold $M$ as it is flowed along a prescribed vector field on $M$. 
The concepts of second fundamental form and 
MCVF  of $S$ are introduced along the derivation. The exposition   is standard and can be found, e.g., in references \cite{jost,li,onil}. 
Some differences come from the assumption that  we allow the ambient manifold to  be semi-Riemannian (this is  also done in \cite{onil})
and use arbitrary basis for $TS$ instead of restricting to orthonormal ones. In subsection \ref{SSzoo} we give the 
definitions of the different types of compact trapped submanifolds that we are interested in. \\

Let $S$ be a $k-$dimensional manifold, $(M,g)$ a semi-Riemannian manifold of 
dimension $m>k$ 
and $\Phi:~S  \to M$ 
 an embedding. 
If $u^\a \to x^a(u)$ is the expression of $\Phi$ in local coordinates, then the  
the pull-back  of $(0,l)$ tensors 
from $M$ to $S$ is provided by 
\begin{equation} \label{tetrad}
e^a_\a = \frac{\p x^a}{\p u^\a}.
\end{equation}
In particular, the  metric induced on $S$ is 
\begin{equation}\label{hst}
{h}_{\a \b}(u) = g_{ab}(x(u)) e^a_\a(u)  e^b_\b(u).  %\frac{\p x^a}{\p u^\a}  \frac{\p x^b}{\p u^\b}.  
\end{equation}
We are interested in the case where this metric is spacelike. 
We will not distinguish  $S$ from $\Phi(S) \subset M$.\\

Let $\perp$ [$\top$] denote the normal [tangent] component of  vectors  defined on $S$, $TS$ and $(TS)^\perp$ the tangent and normal bundles, 
$\mathfrak{X}(S)$ the set of (tangent) vector fields on $S$ and $\mathfrak{X}(S)^\perp$ the set of normal vector fields. 
The second fundamental form of $S \subset M$ is the  $\mathfrak{X}(S)^\perp$ valued 
symmetric $(0,2)$ tensor field on $S$ defined, for  $X, Y \in \mathfrak{X}(S)$, as
\begin{equation}\label{sffa}
\sff(X,Y)=-(\nabla_XY)^\perp \in \mathfrak{X}(S)^\perp
\end{equation}
(for a proof of its tensorial properties  see \cite{onil}). 
In components, 
 \begin{equation}\label{sffb}
\sff_{\a \b}^b :=   -(e_\a^a \nabla_a e_\b^b)^\perp.
\end{equation}
This tensor  is symmetric  since $\sff(X,Y)-\sff(Y,X)= [Y,X]^\perp=0$ (as the commutator of tangent fields is tangent).
The $S-$trace of this tensor gives 
the mean curvature vector field (MCVF) on $S$ (conventions vary, the sign in \eqref{sffb} and normalization in \eqref{H} agree with the definitions  in \cite{li} and \cite{Mars:2003ud} and differ from those in 
 \cite{onil} and 
\cite{jost}): 
\begin{equation} \label{H}
H^b= - h^{\a \b}(e_\a^a \nabla_a e_\b^b)^\perp.
\end{equation}
We say that  $p \in S$ is an \emph{umbilic point} if 
the second fundamental form is proportional to the metric at $p$:
\begin{equation}\label{up}
\sff_{\a \b}^b |_p= (\text{dim }S)^{-1} \, H^b h_{\a \b} |_p.
\end{equation}
$S$ is \textit{umbilic} if \eqref{up} holds at all of its points.\\

Now suppose that $\Phi_t:~S \times (-\e,\e)_t \to M$ 
is a smooth map such that, for every $t$,  $\Phi_t$ is an embedding with $\Phi_{t=0}=\Phi$ above. 
We define 
$S \to \Phi_t(S) =: S_t$ and assume that 
the induced metric on  $S_t$ is spacelike for every $t$. 
We identify $S_{t=0}=: S$ and regard $S_t$  as a deformation 
of $S$ along the the \textit{deformation vector field},  
defined on $\{ \Phi_t(S) \, | \, t \in (-\e,\e) \} \subset M$ as 
\begin{equation}\label{dvf}
\zeta^a =  \frac{\partial x^a(u,t)}{\p t }. %\right|_{t=0} 
\end{equation}
We are interested in 
calculating the variation with $t$ of the $k-$volume  of $S_t$. \\

Again, if $u^\a \to x^a(u,t)$ is an expression of $\Phi_t$ in local coordinates, the  
 pull-back  of $(0,l)$ tensors 
from $M$ to $S_t$ is provided by $e^a_\a = \p x^a/ \p u^\a$, and we may use the 
$u^\a$ as local coordinates for  $S_t$. 
The  metric induced on $S_t$ is 
\begin{equation}\label{ht}
{h(t)}_{\a \b} = g_{ab}(x(u,t)) e^a_\a(u,t)  e^b_\b(u,t)  %\frac{\p x^a}{\p u^\a}  \frac{\p x^b}{\p u^\b},  
\end{equation}
and its volume form  $\e(t)$ is (from here on $h(t=0)=:h_0$, $\e(t=0)=\e_0$, etc) 
\begin{equation}\label{vf}
\e(t) = \sqrt{\text{det}\, h(t)} \;  d^n u = \frac{ \sqrt{\text{det}\, h(t)}}{ \sqrt{\text{det}\, h_0}}\;  \e_0   =: 
\J  \,  \e_0.
\end{equation}
Using  $\p_t{ \sqrt{\text{det} \,h}} = \tfrac{1}{2}  \sqrt{\text{det} \,h}\; h^{\a \b} \p_t {h}_{\a \b}$
  we find that
\begin{equation} \label{dote}
\p_t \J= \frac{ \p_t{\sqrt{\text{det}\, h}}}{ \sqrt{\text{det}\, h_0}} =  \tfrac{1}{2} h^{\a \b} \p_t{h}_{\a \b} \; \J.  
\end{equation}
Equations \eqref{dvf} and 
 (\ref{ht}) give
\begin{equation}\label{step1}
\p_t h_{\a \b} = \, e^a_\a e^b_\b \;\zeta^c \p_c g_{ab} \, + g_{ab} \left[ \frac{\partial^2 x^a}{\partial t \partial u^\a} e^b_\b + 
\frac{\partial^2 x^b}{\partial t \partial u^\b} e^a_\a \right].
\end{equation}
Calculations are simplified if we assume that the $x^a$ are normal coordinates of $M$ \emph{at the 
evaluation point in \eqref{step1}}, so that  
\begin{equation}\label{step2}
\p_c g_{ab}\overset{NC}{=} 0  \text{ and} \; 
\frac{\partial^2 x^a}{\partial t \partial u^\a} = \frac{\partial^2 x^a}{\partial u^\a \partial t} 
=  \frac{\partial \zeta^a}{\partial u^\a} \overset{NC}{=}  e^c_\a \nabla_c \zeta^a.
\end{equation}
Using \eqref{step2} in \eqref{step1}  gives 
$\p_t h_{\a \b} \overset{NC}{=}   e^a_\a e^b_\b  (\nabla_a \zeta_b + \nabla_b \zeta_a)$
and, 
since this equation is covariant, it must hold
everywhere, the use of  normal coordinates having 
been  a temporary recourse to simplify calculations:
\begin{equation}\label{ph}
\p_th_{\a \b} =   
 e^a_\a e^b_\b  (\nabla_a \zeta_b + \nabla_b \zeta_a) = 
  e^a_\a e^b_\b \,   \pounds_\zeta\, g_{ab}.
\end{equation}
Equations \eqref{dote} and \eqref{ph}  give the time derivative 
of the volume form in terms of the deformation vector:
\begin{equation} \label{dotV}
\p_t \e(t) = \left(h^{\a \b} e_\a ^a e_\b ^b \, \nabla_a \zeta_b\right) \, \e(t).
\end{equation}
This equation  gives the local, pointwise increase of $k-$volume. In what follows we work 
out 
an alternative  expression that is useful when integrated over closed surfaces. 
We focus on the initial variation  of the volume of $V(S_t)$, 
$\p_t (V(S_t))|_{t=0} =: \dot V_\zeta$, which is 
\begin{equation}\label{Vdot}
\dot V_\zeta = \int_S \left(h^{\a \b} e_\a ^a e_\b ^b \, \nabla_a \zeta_b\right) \, \e_o
\end{equation}
Decomposing the deformation vector into its components 
tangent and normal to $S$, 
 $\zeta^b=\zeta^b_{\top} +\zeta^b_{\perp} $  and introducing the covariant derivative $D$ of 
$(S,h)$,  we find that 
 \begin{equation} \label{desde}
\begin{split}
   h^{\a \b}\,  e^a_\a e^b_\b \nabla_a \zeta_b 
 &=  h^{\a \b}\, e^a_\a e^b_\b\, (\nabla_a \zeta_b^{\top} + \nabla_a \zeta_b^\perp)\\
 & = h^{\a \b} D_\a \zeta_\b^{\top} - \zeta_b^\perp h^{\a \b} (e_\a^a \nabla_a e_\b^b) \\
 &= \rm{div}_S \, \zeta^{\top} +  \zeta_b^\perp H^b.
 \end{split}
 \end{equation}

 If $\zeta^\top$  is compactly supported or $S$ is closed (compact and boundary-less) then, 
 from Gauss' theorem, the first order variation of volume $\delta V =
 \rm{div}_S \, \zeta^{\top}$ integrates to zero on $S$ 
 and the first order variation of volume $\p_t (V(S_t))|_{t=0} \;\delta t=: \dot V_\zeta \; \delta t$ 
 is given by
\begin{equation} \label{dv2}
\dot V_\zeta =   \int_S  \zeta_b^\perp H^b \; \e_o =    \int_S  \zeta_b H^b \; \e_o.
\end{equation} 
The ``disappearance'' of $\zeta^\top$ is   
 to be expected: the flow along a tangential compactly supported field does not change the volume of $S$ at all, as it 
simply revolves the points within its support.

\subsection{Trapped submanifolds} \label{SSzoo}
A spacelike submanifold $S$ of a spacetime $M$ is said to satisfy the \emph{trapping condition} at $p$ if its mean 
curvature vector  $H^a$
is future timelike at $p$. $S$ is \emph{trapped}  if all of its points satisfy 
the trapping condition. In view of equation \eqref{dv2}, a 
spacelike \emph{closed} 
trapped 
submanifold (CTM from now on) 
of dimension $k$ has the property 
that, when 
flowed along any future causal deformation 
vector field, its $k-$volume initially contracts.  Recall that \emph{closed} 
here means  an ordinary manifold (that is, without boundary) which  is compact 
(so that CTM may just be read as \emph{compact} trapped submanifold).\\

A codimension two CTM will be called a 
\emph{closed trapped surface} (CTS). 
There are a number of variations of the CTS concept based on the following fact: since the normal space of  a CTS submanifold $S$ has  induced 
metric of signature $(-,+)$, two future null vector fields can be found in $\mathfrak{X}(S)^\perp$,   
$\ell_\pm$, cross normalized such that $\ell^a_+ \ell^b_- g_{ab}=-1$ (these can be defined  globally  
if we assume $S$ is orientable). In some cases (e.g.,  when $S \subset \Sigma$ splits a spacelike hypersurface $\Sigma$ into open interior an exterior regions) it makes sense 
to call one of these null directions ($\ell_+^a$) ``outer  pointing'', then 
$\ell^a_-$ is ``inner pointing''. 
 Let $\theta_\pm$ be the unique scalar fields on $S$ such that 
\begin{equation}\label{hl}
H^a = -\theta_- \ell_+^a - \theta_+ \ell_-^a.
\end{equation}
Note from \eqref{dv2} that $\theta_{\pm} = H_a \ell^a_{\pm}$ gives the local initial expansion rate of $S$ along $\ell^a_{\pm}$. 
Note also that $\theta_\pm$ are the traces of the \textit{null second fundamental forms} $-(e^a_\a \nabla_a e^b_\b) \ell^\pm_b$ (compare with \eqref{sffb}). 
Typically (e.g., Minkowski spacetime) closed surfaces are outer expanding ($\theta_+>0$) and inner contracting ($\theta_-<0$). 
CTS, instead,  have $\theta_\pm<0$: both the outgoing and ingoing light wave fronts initially contract, this being a 
strong gravity effect. The trapping condition $\theta_\pm <0$ is, of course, equivalent to the condition that  $H^a$ be future timelike (equation \eqref{hl}). \\

Frequent variations of the CTS concept  ($\theta_-<0$,  $\theta_+<0$, equivalently: future timelike $H^a$) for closed surfaces are:  marginally  trapped surface (MTS, $\theta_-<0$,  $\theta_+=0$, 
equivalently: $H^a = \a \ell^a, \a>0$), 
 marginally outer trapped surface (MOTS, $\theta_+=0$,  $\theta_-$ arbitrary, equivalently: $H^a \propto \ell^a_+$),  weakly outer trapped surfaces \cite{Andersson:2007} 
 (WOTS, $\theta_+ \leq 0$, $\theta_-$ arbitrary, equivalently: $\ell^a_+ H_a \leq 0$). There are also mirror definitions 
  such as past trapped surface, that is, initially contracting 
 when flowed along any \textit{past} directed deformation
vector field  ($\theta_\pm>0$, $H^a$ past timelike). The results in this work can be easily be recasted to past trapped surfaces.\\
For higher codimension closed submanifolds in spacetimes of arbitrary dimensions we will only need, besides  the concept of closed trapped submanifold above (CTM, $H^a$ future timelike), 
that of  marginally 
trapped submanifold (MTM, $H^a$ future null).

\section{Obstructions for trapped submanifolds}

Let $g: M^{n+1} \to \mathbb{R}$ be a $C^2$ function  and  $Z^{(k)}_g \subset M$ an open subset where $g$ has  future causal gradient $\nabla^a g$
 and  the level sets of  $g$ are $k-$ future convex (Definition \ref{dfn2} below). 
The obstruction results presented  in this section are based on the impossibility that the restriction $g|_S$ of $g$ to  a $k$ dimensional CTM $S$ has a local maximum 
at a point  $p \in S \cap Z^{(k)}_g$.  This result has two immediate consequences, as explained in detail along this and the following sections: 
\begin{enumerate}[i)]
\item No $k-$dimensional CTM $S$ satisfies $S \subset Z^{(k)}_g$.
\item There are  hypersurfaces that act as barriers that cannot be crossed by $k-$dimensional CTMs.
\end{enumerate}
These results are, essentially,  the content of Theorem \ref{thm1} and Corollary \ref{cor2}. \\

Since the relevant aspect of the function $g$  is  the associated foliation by $k-$future convex level sets, we can actually do without $g$ 
and work instead with  $k-$future convex  spacelike/null hypersurfaces.  
  The condition that ``$S$  reaches a maximum of $g|_S$'' can  be rephrased  as ``$S$ is tangent to a $k-$future convex  spacelike/null hypersurface from its future side'', 
 which has the consequence that  k-CTMs  cannot live in open sets foliated by such hypersurfaces. 
A rewording of Theorem \ref{thm1} and its Corollary in this language is  presented  as Theorem \ref{thm1a} and Corollary \ref{cor2a}. Parts i) of these  statements  are indeed slightly 
stronger than their counterparts  using the $g$ function, since they  only require  \textit{a single} spacelike/null $k-$future convex hypersurface $\Sigma$. 
This   technical gap is filled in Remarks \ref{slf} and \ref{nf}, which remind  us  that any spacelike or null hypersurface can locally be thought of 
as a particular slice  in a foliation of the same type.\\

The results in this section are presented with a focus on the future trapping condition and past barriers for CTMs. They admit 
trivial variations (that we do not state but we use in example in section \ref{Sapps}) to deal with future barriers and past trapped submanifolds.
 Extensions to MTMs  simply require a stronger notion of $k-$future convexity, as  explained in Remarks  \ref{marginal} and \ref{marginal2} below.

 \begin{thm} \label{thm1} 
 Let $(M^{n+1},g_{ab})$ be  a spacetime, $g: M \to \mathbb{R}$ 
 a $C^2$ function and $Z_g^{(k)}$ an open  
 set where $\nabla^a g$ is future causal  and the trace of 
   the restriction of $\nabla_a \nabla_b g$ to spacelike $k-$dimensional subspaces of the tangent space of the $g-$level sets is 
 non-negative. 
 \begin{enumerate}[i)]
\item  If $S \subset M$ is a 
$k-$dimensional  spacelike submanifold and  $g|_S$ has a local maximum at $p \in Z_g^{(k)}$,
 then $S$ cannot satisfy the trapping condition at $p$.
 \item If $S$ is a $k-$dimensional CTM, then it is not possible that $S \subset 
Z_g^{(k)}$.
\end{enumerate}
 \end{thm}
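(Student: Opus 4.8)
The plan is to reduce part ii) to part i) and then establish part i) by a pointwise Hessian computation at the maximum. For ii), since a CTM $S$ is compact and $g|_S$ is continuous, $g|_S$ attains a global maximum at some $p\in S$; if $S\subset Z_g^{(k)}$ then $p\in Z_g^{(k)}$, and part i) forbids the trapping condition at $p$, contradicting that $S$ is trapped at every point. So all the content lies in part i).

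For part i), the key object is the Hessian of the restricted function $f:=g|_S$ with respect to the induced connection $D$ of $(S,h)$. First I would compute, in local coordinates $u^\a$ with tetrad $e^a_\a=\p x^a/\p u^\a$, the first derivative $\p_\a f=e^a_\a\nabla_a g$, and then differentiate once more, splitting $e^c_\a\nabla_c e^a_\b$ into its tangential part (which by the Gauss formula reproduces the Christoffel symbols $\Gamma^\g_{\a\b}e^a_\g$ and cancels against $\Gamma^\g_{\a\b}\p_\g f$) and its normal part (which is $-\sff_{\a\b}^a$ by \eqref{sffb}). I expect this to yield
\begin{equation}
D_\a D_\b f = e^a_\a e^b_\b\,\nabla_a\nabla_b g - \sff_{\a\b}^c\,\nabla_c g .
\end{equation}
Tracing with $h^{\a\b}$ and recalling from \eqref{H} that $h^{\a\b}\sff_{\a\b}^c=H^c$ then gives the crucial identity
\begin{equation}
h^{\a\b} D_\a D_\b f = h^{\a\b} e^a_\a e^b_\b\,\nabla_a\nabla_b g - H^c\,\nabla_c g .
\end{equation}
This relation between the intrinsic Laplacian of $g|_S$, the ambient Hessian restricted to $TS$, and the MCVF is the technical heart of the argument, and keeping every tangential/normal projection and sign correct is where I expect the main effort to lie.

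Next I would extract the two conditions imposed by a local maximum. At first order $\p_\a f = e^a_\a\nabla_a g = 0$, so $\nabla^a g$ is normal to $S$ at $p$; consequently $T_pS$ lies in the kernel of $dg$, i.e. it is a spacelike $k$-dimensional subspace of the tangent space of the $g$-level set through $p$, which is precisely the class of subspaces to which the $Z_g^{(k)}$ hypothesis applies. At second order the induced Hessian $D_\a D_\b f$ is negative semidefinite, so its $h$-trace satisfies $h^{\a\b}D_\a D_\b f\le 0$ at $p$. Substituting into the identity above, the non-negativity of the trace of $\nabla_a\nabla_b g$ on such subspaces makes the ambient-Hessian term non-negative, and I obtain $H^c\nabla_c g\ge 0$ at $p$.

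Finally I would close with a signature argument. By hypothesis $\nabla^a g$ is future causal on $Z_g^{(k)}$, hence nonzero at $p$; if $S$ were trapped at $p$, then $H^a$ would be future timelike there. In the mostly-plus signature the inner product of a future timelike vector with a nonzero future causal vector is strictly negative, so $H^c\nabla_c g<0$, contradicting $H^c\nabla_c g\ge 0$. Thus $S$ cannot satisfy the trapping condition at $p$, proving i). The only genuinely delicate step is the derivation of the Hessian identity; once it is in place, the maximum conditions and the Lorentzian sign lemma are routine.
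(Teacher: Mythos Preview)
Your proposal is correct and follows essentially the same approach as the paper. The paper obtains the key identity $h^{\a\b}e^a_\a e^b_\b\nabla_a\nabla_b g=\Delta_S g+H^b\nabla_b g$ by specializing the first-variation computation \eqref{desde} to $\zeta^a=\nabla^a g$, whereas you derive it directly from the Gauss decomposition of $e^c_\a\nabla_c e^a_\b$; the remaining steps (critical point $\Rightarrow T_pS\subset T_p\Sigma$, negative semidefinite Hessian at a maximum, and the Lorentzian sign argument ruling out future timelike $H^a$) coincide with the paper's equations \eqref{c1}--\eqref{keq}.
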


\begin{proof} We use the notation introduced at the beginning 
of section \ref{Sintro} and work with local coordinates around $p$: $u^\a \to x^a(u)$ is the expression of 
the embedding $S \to M$, $e^a_\a = \p x^a/ \p u^\a$, $h_{\a \b} =g_{ab} e^a_\a e^b_\b$ and  $h^{\a \b}$ its inverse. 
Let $\Sigma$ be the $g-$level set through $p$.  Our definition of causal vector (section \ref{Sintro}) implies $\nabla^a g \neq 0$ in 
$Z_g^{(k)}$, then,  near $p$, $\Sigma$ is an $n-$dimensional embedded submanifold. 
Since $p$ is a critical point of $g|_S$, for  any 
$t^c \in T_p S$, $t^c \p_c g=0$. This implies that $T_pS$ is a subspace of $T_p\Sigma$. 
By hypothesis, the trace of $(\nabla_a \nabla_b g)|_{T_pS}$ is non negative,
\begin{equation}\label{c1}
h^{\a \b}\,  e^a_\a e^b_\b \nabla_a \nabla_b g |_p \geq 0.
\end{equation}
On the other hand, equation \eqref{desde} applied to the case $\zeta^a = \nabla^a g$ gives 
\begin{equation}\label{c2}
h^{\a \b}\,  e^a_\a e^b_\b \nabla_a \nabla_b g = \Delta_S g + H^b \nabla_b g,
\end{equation}
where $\Delta_S g = h^{\a \b} D_\a D_\b g$ is the $S-$Laplacian  of $g|_S$ ($D$ is the 
covariant derivative on $(S,h)$). 
Since $p$ is a local maximum of $g|_S$, any coordinate 
Hessian 
$\p_\a \p_\b g$ of $g|_S$ at this point is negative semi-definite and agrees with 
$D_\a D_\b g=\p_\a \p_\b g - \Gamma_{\hspace{-1mm}S}^\g{}_{\a \b} \p_\g g$, then 
\begin{equation}\label{c3}
\Delta_S g=
h^{\a \b} (\p_\a \p_\b g - \Gamma_{\hspace{-1mm}S}^\g{}_{\a \b} \p_\g g) |_p
 =h^{\a \b} \p_\a \p_\b g |_p \leq 0.
\end{equation}
Equations \eqref{c1}-\eqref{c3} imply 
\begin{equation} \label{keq}
H^b \nabla_b g |_p = h^{\a \b}\,  e^a_\a e^b_\b \nabla_a \nabla_b g\mid_p - \Delta_S g\mid_p \; \geq 0.
\end{equation}
 Since 
$\nabla^a g$ is future causal, it follows that $H^b|_p$ cannot be future 
timelike. This proves i). To prove ii) note that 
 the compactness of $S$ implies that $g|_S$ reaches a global (then local) 
    maximum within $Z^{(k)}_g$.
\end{proof}

 \begin{cor} \label{cor2}
  Let $(M^{n+1},g_{ab})$ be  a spacetime, $g: M \to \mathbb{R}$ 
 a $C^2$ function and $Z_g$ an open
 set where   $\nabla^a g$ is future causal  and  the restriction of $\nabla_a \nabla_b g$ to  the tangent space of the $g-$level sets is 
 positive semi-definite. 
 \begin{enumerate}[i)]
\item  No trapped submanifold of any dimension can reach a local maximum 
 of $g$ within $Z_g$.
\item  If $S$ is a CTM, it is not possible that $S \subset Z_g$.
\end{enumerate}
 \end{cor}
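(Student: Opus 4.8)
The plan is to obtain the Corollary as a direct specialization of Theorem \ref{thm1}: I would show that the hypothesis defining the region $Z_g$ is stronger than the hypothesis defining $Z_g^{(k)}$, so that $Z_g \subseteq Z_g^{(k)}$ for every dimension $k$, after which both parts of the Corollary become immediate instances of the corresponding parts of Theorem \ref{thm1} taken with $k = \dim S$.

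The only substantive step is an elementary linear-algebra observation. Fix $p \in Z_g$, write $\Sigma$ for the level set of $g$ through $p$, set $V = T_p\Sigma$, and let $B_{ab} := \nabla_a\nabla_b g|_p$. By hypothesis $B$ is positive semi-definite on $V$, i.e.\ $v^a v^b B_{ab} \geq 0$ for all $v^a \in V$. Now let $W \subseteq V$ be any spacelike $k$-dimensional subspace; its induced metric is positive-definite, so $W$ admits an orthonormal basis $\{w_{(1)}^a, \dots, w_{(k)}^a\}$, and the trace of the restriction of $B$ to $W$ is $\sum_{i=1}^{k} w_{(i)}^a w_{(i)}^b B_{ab}$, a sum of non-negative terms, hence non-negative. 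Since this holds for every spacelike $W \subseteq V$, the point $p$ satisfies the defining condition of $Z_g^{(k)}$; combined with the shared requirement that $\nabla^a g$ be future causal, this yields $Z_g \subseteq Z_g^{(k)}$ for all $k$.

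With the inclusion established, both statements follow at once. For (i), suppose a trapped submanifold $S$ of dimension $k$ reaches a local maximum of $g|_S$ at a point $p \in Z_g$. Then $p \in Z_g^{(k)}$, and $S$ is a $k$-dimensional spacelike submanifold with $g|_S$ maximal at $p$, so Theorem \ref{thm1}(i) forces $S$ to violate the trapping condition at $p$ --- contradicting that a trapped submanifold satisfies it at every point. For (ii), if a CTM $S$ satisfied $S \subset Z_g$, then $S \subset Z_g^{(k)}$ with $k = \dim S$, which is excluded by Theorem \ref{thm1}(ii).

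I do not expect a genuine obstacle, since the claim merely repackages an already-proved theorem; the one point deserving a line of care is the null case. When $\nabla^a g$ is null the level set $\Sigma$ is a null hypersurface and $V = T_p\Sigma$ is degenerate, containing the null direction $\nabla^a g$ itself. The argument above is unaffected, because it uses positive-definiteness only on the spacelike subspace $W$ and otherwise only the scalar inequality $v^a v^b B_{ab} \geq 0$; one need only note that a degenerate $V$ of this type still carries spacelike subspaces of each dimension up to $n-1$, which is precisely the relevant range once the critical-point condition forces $T_pS \subseteq V$ to be a spacelike $k$-plane.
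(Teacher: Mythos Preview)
Your proof is correct and follows the same route as the paper: reduce the Corollary to Theorem \ref{thm1} via the inclusion $Z_g \subset Z_g^{(k)}$, which you establish by the orthonormal-basis trace argument. The paper calls this inclusion ``clear'' and then proves an additional fact not strictly needed for the Corollary, namely the \emph{equality} $z_g = z_g^{(1)}$ (equation \eqref{subsets}); this reverse inclusion $z_g^{(1)} \subset z_g$ requires a case split on whether $\nabla^a g$ is timelike or null at $p$, and is what justifies Remark~1 (that the positive semi-definiteness hypothesis cannot be weakened). Your argument omits this extra equality, which is fine for the Corollary itself, but be aware that the paper's version carries more information.
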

\begin{proof}
For any $k$, let $z_g^{(k)}$ be the maximal subset of $M$ satisfying the conditions in Theorem \ref{thm1} and 
$z_g$ the maximal subset of $M$ satisfying the conditions in Corollary \ref{cor2}, then 
\begin{equation} \label{subsets}
z_g=z_g^{(1)} \subset z_g^{(k)},
\end{equation}
and the Corollary follows from $Z_g \subset z_g \subset z_g{(k)}$. 
Equation  equation \eqref{subsets} deserves some explanation. Clearly $z_g \subset z_g^{(k)}$ for any $k$, but we should check that 
$z_g^{(1)} \subset z_g$. 
 To do so, we need consider 
two different cases: 
\begin{enumerate} 
\item $\nabla^a g$ is timelike  at a point $p \in z_g^{(1)}$. \\
In this case
the induced metric 
on $T_p\Sigma$ is positive definite and 
 \emph{any}  vector $v^c \in T_p\Sigma$ spans a spacelike one dimensional vector subspace $W$. 
Equation \eqref{c1} applied to $k=1$ and the vector subspace $W$ 
reads 
$(v^c v_c)^{-1} \, v^a v^b \nabla_a \nabla_b g \geq 0$. 
Since $v^c \in T_p \Sigma$ is arbitrary, the positive definiteness of the restriction of $\nabla_a 
\nabla_b g$ to $T_p\Sigma$  follows, showing that $p \in z_g$. 
\item $\nabla^a g$ is null at a point $p \in z_g^{(1)}$. \\
 In this case the 
  induced ``metric'' 
 on $T_p\Sigma$ is degenerate with signature $(0,+,+,+,...)$ and the $n-$dimensional 
 $T_p \Sigma$ admits spacelike subspaces of dimension $k=1,2,...n-1$. 
 A vector $v^c \in T_p \Sigma$ 
 is either  spacelike or  proportional to $\nabla^c 
 g$. If it is spacelike,   an argument as in case (a) 
 gives the requirement that 
 $v^a v^b \nabla_a \nabla_b g \geq 0$. 
If $v^c \propto \nabla^c g$ then, given that 
the function $h: M\to \mathbb{R}$ defined by 
$h=g^{ab} \nabla_a g \nabla_b g$ satisfies $h \leq 0$ and $h(p)=0$, 
$p$ is a local maximum of $h$ and  $v^a \nabla_b \nabla_a g 
\propto  \nabla^a g (\nabla_b \nabla_a g) =
\frac{1}{2} \nabla_b h = 0$ at $p$. 
The condition 
$v^a v^b \nabla_a \nabla_b g \geq 0$ then holds trivially for $v^a \propto \nabla^a g$ 
and the positive semi-definiteness of the restriction of $\nabla_a 
\nabla_b g$ to $T_p\Sigma$  follows, showing that $p \in z_g$.
\end{enumerate}
 \end{proof}
 
  \begin{remark}
 In view of the equality in \eqref{subsets} we cannot weaken the positive semi-definiteness 
hypothesis in Corollary \ref{cor2}. 
 \end{remark}
 \begin{remark}\label{marginal}
 The hypothesis in Theorem \ref{thm1}  and Corollary \ref{cor2}  need  to be strengthen in order to rule out the \textit{marginally} trapped  condition at $p$ (section \ref{Sintro}), 
due to the possibility that all terms in \eqref{keq} be zero. This happens if $\nabla^a g$  is future null and 
 proportional to $H^b$,  
 the trace $h^{\a \b}\,  e^a_\a e^b_\b \nabla_a \nabla_b g\mid_p=0$ and the local maximum of $g|_S$ is of higher than second order ($\Delta g|_S=0$). We cannot control 
 this last condition, but the theorem and its corollary  will work for MTMs if  we replace the trace condition \eqref{c1} by a strict inequality. MOTS and WOTS (defined in Section \ref{SSzoo})
  can be discarded if $\nabla^a g$ is \emph{outer} future null and the inequality 
 \eqref{c1} is strict.
   \end{remark}
 \begin{remark}
 The standard definition of \textit{convexity} for a function $M \to \mathbb{R}$ (see, e.g., \cite{bon}) 
 is that $\nabla_a \nabla_b g$ be positive semidefinite.  The condition in Corollary \ref{cor2} is in some sense weaker, but requires 
 that $\nabla^a g$ be future causal. \textit{It is only after restricting the domain of $g$ to the set defined by the condition that $\nabla^a g$ be future causal 
 that the standard convexity   condition is stronger. }
 As an example, let $M=\mathbb{R}^{n+1}$  be $n+1$ dimensional Minkowski space. Assume $x^a$ are  standard inertial Cartesian coordinates, $\eta_{ab}$ the metric 
 matrix  and 
  consider the function $g(x)=\frac{1}{2}\eta_{ab} x^a x^b$. Since $\nabla_a \nabla_b g =\eta_{ab}$, this function is nowhere convex; however 
 $z_g^{(1)}$ agrees with the non-empty  set $\{ x \in M \; | \; g(x) \leq 0, \; x^0>0 \}$ where $\nabla^a g$ is future causal.  
 On the other hand, if $\delta_{ab}$ is the canonical, positive definite metric in $M$ and we define $f: M\to \mathbb{R}$ 
as $f(x)=\delta_{ab}x^ax^b$, then $f$ is convex everywhere and the set where $\nabla^a f$ is future causal agrees with $z_f^{(1)}=\{ x \in M \; | \; g(x) \leq 0, \; x^0<0 \}$.
 \end{remark}
 \begin{remark} \label{fur}
 Since 
\begin{equation}
\begin{split}
\nabla^a (f \circ g) &= f'(g) \nabla^a g,\\
\nabla_a \nabla_b (f \circ g) |_{T\Sigma \otimes T\Sigma}&= f''(g) \nabla_a g \nabla_b g  |_{T\Sigma \otimes T\Sigma}+ f'(g) \nabla_a \nabla_b g   |_{T\Sigma \otimes T\Sigma} \\
&= f'(g) \nabla_a \nabla_b g   |_{T\Sigma \otimes T\Sigma}, 
\end{split}
\end{equation}
($\Sigma$ a level set of $g$), we conclude that, 
if $f'>0$, then $z^{(k)}_{f \circ g}=z^{(k)}_g$ for every $k$. 
This is so because the relevant aspect of 
 $z^{(k)}_g$ is the geometry of the $g-$level sets that foliate it. 
 \end{remark}
 \begin{remark}\label{slf}
 If we are given \emph{a single spacelike hypersurface} $\Sigma$ we can (locally) make it part of a spacelike foliation 
 as follows: take a future unit normal field $N^a$, 
 integrate the geodesic equation with initial condition $N^a$ and define $\tau$ in an 
 open 
neighborhood $O$ of $p \in  \Sigma$, small enough to avoid geodesic crossing, as the affine parameter along the geodesics, with $\tau=0$ on $\Sigma$. 
The $\tau$ level sets  $\Sigma_\tau$ are the leaves of a  spacelike hypersurface foliation of $O$. 
 Defining $g=-\tau$,   $\nabla^a g$ will be future timelike and 
we are led to the context of Theorem \ref{thm1}. 
\end{remark}
\begin{remark}\label{nf}
A single null hypersurface $\Sigma$ can also be regarded as part of a foliation by null hypersurfaces 
in a neighborhood $O$ of a point $p \in \Sigma$. 
To do so we  start from any spacelike section on $S \subset \Sigma \cap O$ and  construct a double null foliation 
as in \cite{aretakis}, section 2.1. In coordinates adapted to this double null foliation the metric has the form
\begin{equation}
ds^2 = -2 \Omega^2 \, du \, dv + \not \hspace{-.7mm}g_{AB} ( dx^A - b^A dv) (dx^B-b^B dv), 
\end{equation}
where the original null hypersurface $\Sigma$ is the  level set $v=0$ and  $S$  is the set defined by $u=v=0$.
The null hypersurface foliation is defined by the level sets of the function $g=-v$, which has future null gradient. 
\end{remark}

 \begin{remark}\label{lur}
   If $c(\tau)$ is 
a future timelike curve within $z_g^{(k)}$, parametrized 
with proper time $\tau$, and $v^a$ is its tangent vector, 
then $\frac{d}{d\tau} g(c(\tau))=v^a \, \nabla_a g  <0$. 
Since $p$ in Theorem \ref{thm1} is a local maximum of $g|_S$, there 
exists an open subset  $O \subset M$ such that $g(q) \leq g(p)$ 
for any $q \in O \cap S$. Thus,   \textit{any timelike} curve $c(\tau)$ 
in $O \cap z^{(k)}_g$ from $\Sigma$ to $S$  has to be future. 
We simplify the description of this situation by  saying that $S$ is 
  tangent to $\Sigma$ \textit{from its future side}. This is the terminology that we use in Theorem \ref{thm1a} below.  
  This concept 
  may not be entirely satisfactory for it may be the case that the local maximum of $g|_S$ at 
  $p$ is not strict, there is an open  neighborhood $p \in Q\subset  S$ that satisfies $Q \subset \Sigma$ (that is, $g$ is constant in $Q$), 
and  no timelike curve as $c(\tau)$ above exists. In this case,  saying that $S$ is tangent to $\Sigma$ ``from its future side'' 
  is questionable, and a more suitable notion might be that of $S$  being tangent to $\Sigma$  ``not from its past side''. 
  Of course, this is related to a similar terminology issue when defining local extrema: consider the case $S \subset \Sigma$, 
  then $g|_S$ is a constant and every point of $S$ is both a local maximum and a local minimum of $g|_S$. 
  Note in pass that Theorem \ref{thm1} tells us that no $k-$CTM may lie within a $k-$future convex  level set $\Sigma$. 
\end{remark}

\noindent 
The observations made in remarks \ref{fur}-\ref{lur}  above allow to restate Theorem \ref{thm1} and Corollary \ref{cor2} 
in terms of a spacelike submanifold $S$ being tangent to a  spacelike/null hypersurface $\Sigma$ from its future side 
(as in Remark \ref{lur}). 
This is done after introducing the appropriate definitions  for $\Sigma$:

\begin{dfn} \label{dfn2} A  
spacelike/null hypersurface 
$\Sigma^n$ of a spacetime  $M^{n+1}$ 
is \emph{k-future convex} if for any $p \in \Sigma$ and 
  any $k-$dimensional \textit{spacelike} subspace $W$ 
 of $V=T_p\Sigma^n$
\begin{equation}\label{C1}
h^{\a \b}\,  e^a_\a e^b_\b \nabla_a N_b  \geq 0.
\end{equation}
Here $N^b$ is a vector field normal to $\Sigma$ and future pointing, 
$e^a_\a, \a=1,2,...,k$ a basis of $W$ and $h^{\a \b}$ 
the inverse metric matrix in this basis, so that 
the left side of \eqref{C1} is the $W-$trace 
of $(\nabla_a N_b)|_{W \otimes W}$.
\end{dfn}

\begin{remark} Related useful definitions are: 
\begin{enumerate}[i)]
\item  $\Sigma$ is \emph{k-future convex at $p$}, if \eqref{C1} holds for $W \subset T_p\Sigma$; 
\item  $\Sigma$ is \emph{strictly} $k-$future convex at $p$, if \eqref{C1} holds for $W \subset T_p\Sigma$ \textit{with an strict inequality}  ($h^{\a \b}\,  e^a_\a e^b_\b \nabla_a N_b  > 0$).
\item  $\Sigma$ is \emph{strictly} $k-$future convex, when condition ii) holds at every point. 
\end{enumerate}
Note that if $T_p \Sigma$ is spacelike, the possibilities in (i)-(ii)  are $k=1,2,...,n$, whereas 
for $T_p \Sigma$ null $k=1,2,...,n-1$.
\end{remark}

\begin{remark} \label{marginal2}  The  \textit{strictly k-future convex} condition allows us to include  MTSs in Theorem \ref{thm1} (see   Remark \ref{marginal}). 
\end{remark}

\begin{dfn} \label{dfn1} A \emph{future convex} spacelike/null hypersurface 
$\Sigma$ of a spacetime  $M$ 
is one for which $X^a  X^b \nabla_a N_b \geq 0$  
for any spacelike tangent vector $X^a$ and future normal
$N^a$. Natural variations of this concept are  \textit{strictly future convex} and 
\textit{(strictly) future convex at $p$}. 
\end{dfn}

The choice of future normal field $N^a$ in  definitions \ref{dfn2} and \ref{dfn1} is irrelevant 
since, for any positive function $\phi$ on $\Sigma$, 
$$\nabla_a (\phi N_b) |_{T\Sigma \otimes T\Sigma}= (\nabla_a \phi) N_b  |_{T\Sigma  \otimes T\Sigma} + 
\phi \nabla_a N_b  |_{T\Sigma \otimes T\Sigma}= \phi \nabla_a N_b  |_{T\Sigma  \otimes T\Sigma}.$$
The scalar $X^a  X^b \nabla_a N_b=-(X^a\nabla_a X^b) N_b$ is tensorial, that is, 
it depends only on 
the values of $X$ and $N$ at the evaluation point. 
Taking taking $N^a=\nabla^a g$ in definitions \ref{dfn2} and \ref{dfn1} we can check that the spacelike/null $g-$level set $\Sigma$ 
in Theorem \ref{thm1} 
are $k-$future 
convex and those in Corollary \ref{cor2} are future convex. \\

Future convexity is equivalent to   $1-$future convexity (see equation \eqref{subsets} and the discussion following it), 
In the case where $T_p\Sigma$ is spacelike, this  is also equivalent to the standard notion of \textit{local convexity}. This is discussed  in detail 
in section \ref{Skfc}.\\
If $N^a$ is future null on a neighborhood $\sigma \subset \Sigma$ of  $p$,
 then 
 $\sigma$ is a null hypersurface  and the 
restriction of $\nabla_a N_b$ to its tangent space 
 is degenerate along $N^a$ 
and defines a symmetric tensor on the quotient space 
$T_p \sigma / \langle N^a \rangle$: 
the null second fundamental form with respect to $N^a$, 
(see \cite{Galloway1,Galloway2}). 
Here Definition \ref{dfn1} agrees with the notion that this 
tensor be positive semi-definite, whereas the $n-1$ future convex condition  agrees with 
the null mean curvature (as defined in \cite{Galloway1,Galloway2} being nonnegative.\\

For general values of $k$, the $k-$future convexity concept introduced in Definition \ref{dfn2},  as far as we are aware  has not been used before. 
This is the condition that  we explore in detail in section \ref{Skfc}, and the one that allows discriminate regions where, e.g., CTS 
are forbidden  whereas TLs are not.\\

The local maximum condition in Theorem \ref{thm1}, together with Remark \ref{lur}, motivate the following 

\begin{dfn} \label{fs} 
A $k-$dimensional submanifold $S$ ($k<n$) is tangent to a spacelike/null hypersurface $\Sigma$ at a point $p$ \textit{from its future side} if 
$T_pS$ is a subspace of $T_p\Sigma$ and 
there exist an open spacetime neighborhood $M \supset O \ni p$ such that any timelike curve from $O \cap \Sigma$ to $O \cap S$ 
is future. 
\end{dfn}

Using remarks \ref{slf} and \ref{nf}, Theorem \ref{thm1} and Corollary \ref{cor2} can therefore 
be 
restated   as

 \begin{thm} \label{thm1a} 
 Let $(M^{n+1},g_{ab})$ be  a spacetime of arbitrary dimension.
 \begin{enumerate}[i)]
\item  If $\Sigma$ is a  $k-$future convex spacelike/null hypersurface and $S$  a spacelike 
$k-$dimensional submanifold tangent to  $\Sigma$  at $p$  from its future side, 
then  $S$ cannot satisfy the trapping condition at  $p$.
 \item If $Z^{(k)}$ is an open subset of $M$ 
 foliated with $k-$future convex spacelike/null hypersurfaces and  $S$ is a $k-$dimensional  CTM, then it is not possible that $S \subset 
Z^{(k)}$.
\end{enumerate}
 \end{thm}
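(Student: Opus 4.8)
The plan is to deduce Theorem \ref{thm1a} from Theorem \ref{thm1} and Corollary \ref{cor2} by manufacturing, near the relevant point, a $C^2$ defining function $g$ whose level sets are the given hypersurface(s), and then transcribing the hypotheses. The two ingredients that make this possible are already in place: Remarks \ref{slf} and \ref{nf} show how to realize a single spacelike or null hypersurface locally as the zero level set of a function $g$ with future causal gradient $\nabla^a g$, and the discussion following Definition \ref{dfn1} shows that, on $T\Sigma \otimes T\Sigma$, the tensor $\nabla_a N_b$ is insensitive to a positive rescaling of the normal. Thus taking $N^a = \nabla^a g$ identifies the $k$-future convexity inequality \eqref{C1} with the trace condition \eqref{c1}, since $\nabla_a N_b = \nabla_a \nabla_b g$.

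For part i) I would proceed as follows. Using Remark \ref{slf} (if $\Sigma$ is spacelike) or Remark \ref{nf} (if $\Sigma$ is null), construct on a neighbourhood $O$ of $p$ a $C^2$ function $g$ with $\Sigma \cap O = \{g=0\}$ and $\nabla^a g$ future causal; by construction $\nabla^a g \ne 0$ near $p$, so $N^a := \nabla^a g$ is a future normal to the leaves. The $k$-future convexity of $\Sigma$ at $p$ then reads $h^{\a \b} e^a_\a e^b_\b \nabla_a N_b \ge 0$ for every spacelike $k$-plane $W \subset T_p\Sigma$, which is exactly \eqref{c1} at $p$. It remains to see that tangency from the future side forces $g|_S$ to have a local maximum at $p$. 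This is Remark \ref{lur} read in reverse: since $\nabla^a g$ is future causal, $g$ decreases along future timelike curves, so if some $q \in S$ arbitrarily near $p$ had $g(q) > g(p)$ then $q$ would lie to the past of $\Sigma$, and a short timelike curve from $O \cap \Sigma$ to $q$ would be past directed, contradicting Definition \ref{fs}. Hence $g|_S$ attains a local maximum at $p$, and the pointwise argument of Theorem \ref{thm1} i) (whose steps \eqref{c1}--\eqref{keq} are all evaluated at $p$, so the possibly non-open, non-strict nature of the convexity condition on neighbouring leaves is immaterial) shows that $S$ cannot satisfy the trapping condition at $p$.

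For part ii) I would again pass to a defining function, now exploiting that \emph{every} leaf of the foliation of $Z^{(k)}$ is $k$-future convex. The foliation is transversally oriented by the future normal, so locally it is the level-set family of a $C^2$ function $g$ with future causal gradient, and the $k$-future convexity of all leaves gives \eqref{c1} on the whole chart, i.e. the chart sits inside an \emph{open} set $Z_g^{(k)}$. Since $S$ is compact and $S \subset Z^{(k)}$, I would cover $S$ by finitely many such charts and use Remark \ref{fur} (a leaf-preserving, orientation-preserving reparametrization $g \mapsto f \circ g$ with $f' > 0$ does not alter the $k$-future convexity content) to make the local functions mutually compatible; compactness of $S$ then yields a point $p \in S$ lying on a topmost leaf touched by $S$, i.e. a point where $g|_S$ has a local maximum, equivalently where $S$ is tangent to its leaf from the future side. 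Part i) applied at $p$ contradicts the assumption that the CTM $S$ is trapped everywhere.

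The step I expect to require the most care is the equivalence between \emph{tangent from the future side} and \emph{local maximum of $g|_S$} in the null case, where timelike curves interact with a null leaf more delicately than in the spacelike case and where, as flagged in Remark \ref{lur}, the maximum may fail to be strict if a whole neighbourhood of $p$ in $S$ lies inside $\Sigma$. For part ii) the genuine work is the passage from an abstract foliation to a single compatible defining function along the compact $S$: transversal orientability, Remark \ref{fur}, and the compactness of $S$ must be combined to guarantee the existence of an extremal leaf and hence a point of tangency from the future side.
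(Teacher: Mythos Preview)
Your proposal is correct and follows essentially the same route the paper takes: the paper does not give an independent proof of Theorem \ref{thm1a} but presents it as a restatement of Theorem \ref{thm1} via Remarks \ref{fur}--\ref{lur}, with Remarks \ref{slf} and \ref{nf} supplying the local defining function $g$ and Remark \ref{lur} handling the equivalence between ``tangent from the future side'' and ``local maximum of $g|_S$''. Your write-up actually spells out more detail than the paper does (in particular the observation that the proof of Theorem \ref{thm1} i) only uses the convexity inequality \emph{at} $p$, so the neighbouring leaves produced by Remarks \ref{slf}/\ref{nf} need not be $k$-future convex, and the patching argument for part ii)); the underlying strategy is identical.
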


  \begin{cor} \label{cor2a}
  Let $(M^{n+1},g_{ab})$ be  a spacetime of arbitrary dimension.
 \begin{enumerate}[i)]
\item   If $\Sigma$ is a  future convex spacelike/null hypersurface and $S$  a spacelike 
 submanifold tangent to  $\Sigma$  at $p$  from its future side, 
then  $S$ cannot satisfy the trapping condition at  $p$. 
\item  If $Z$ is an open subset of $M$ 
 foliated with future convex spacelike/null hypersurfaces and  $S$ is a CTM,  it is not possible that $S \subset Z$.
\end{enumerate}
 \end{cor}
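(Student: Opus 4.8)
The plan is to obtain Corollary~\ref{cor2a} from Theorem~\ref{thm1a} by exactly the same reduction that produced Corollary~\ref{cor2} from Theorem~\ref{thm1}: the one extra ingredient needed is the implication \emph{future convex} $\Rightarrow$ \emph{$k$-future convex}, valid for every admissible $k$, after which each assertion follows by applying the matching part of Theorem~\ref{thm1a} with $k=\dim S$.

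First I would observe that future convexity (Definition~\ref{dfn1}) coincides with $1$-future convexity (Definition~\ref{dfn2} at $k=1$): a one-dimensional spacelike subspace $W=\langle v\rangle$ has $W$-trace $(v^cv_c)^{-1}v^av^b\nabla_aN_b$, whose sign is that of $v^av^b\nabla_aN_b$ because $v^cv_c>0$. Next I would upgrade this to arbitrary $k$. The tangential bilinear form $B(X,Y)=X^aY^b\nabla_aN_b$ on $T_p\Sigma$ is symmetric---up to a sign it is the $N^a$-component of the second fundamental form $\sff$, cf.\ the discussion following Definition~\ref{dfn1}---so if $B(X,X)\ge 0$ for every spacelike $X$, then the restriction $B|_W$ to any spacelike subspace $W$ is positive semi-definite with respect to the positive-definite induced metric $h$. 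Its $h$-trace $h^{\alpha\beta}e^a_\alpha e^b_\beta\nabla_aN_b$ is then the sum of the non-negative eigenvalues of $B|_W$ relative to $h$, hence non-negative; this is exactly the defining inequality~\eqref{C1}, so $\Sigma$ is $k$-future convex. This step is the coordinate-free counterpart of the inclusion $z_g=z_g^{(1)}\subset z_g^{(k)}$ recorded in~\eqref{subsets}.

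With that in hand both assertions are immediate. For part i) I would set $k=\dim S$; since $\Sigma$ is future convex it is $k$-future convex at $p$, and as $S$ is tangent to $\Sigma$ at $p$ from its future side, part i) of Theorem~\ref{thm1a} yields that $S$ cannot satisfy the trapping condition at $p$. For part ii) a foliation of $Z$ by future convex hypersurfaces is \emph{a fortiori} a foliation by $k$-future convex hypersurfaces with $k=\dim S$, so part ii) of Theorem~\ref{thm1a} rules out $S\subset Z$ for a $k$-dimensional CTM. Alternatively, and in the spirit of Corollary~\ref{cor2}, I could use Remark~\ref{slf} in the spacelike case or Remark~\ref{nf} in the null case to realise $\Sigma$ locally as a level set of a $C^2$ function $g$ with $N^a=\nabla^ag$ future causal, translate ``tangent from its future side'' into ``$g|_S$ has a local maximum at $p$'' by Remark~\ref{lur}, and then invoke Corollary~\ref{cor2} directly; the convexity bookkeeping is identical.

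The step needing most care is the null case. When $\Sigma$ is null the future normal $N^a$ is itself tangent and the induced metric on $T_p\Sigma$ is degenerate along $\langle N^a\rangle$, so I must check that the trace argument of the first step still applies; it does, because every spacelike subspace $W$ avoids the degenerate direction, $h|_W$ is genuinely positive definite, and $B|_W$ inherits positive semi-definiteness (the admissible range being then $k=1,\dots,n-1$). The remaining bookkeeping is routine: one verifies that the double-null construction of Remark~\ref{nf} embeds the given null hypersurface as the leaf $v=0$ with $\nabla(-v)$ future null, and that in part ii) the compactness of $S$ forces the global maximum of $g$ to an interior point of $Z$ whose leaf is a genuine future convex hypersurface, both of which hold once $Z$ is open and the foliation is as hypothesized.
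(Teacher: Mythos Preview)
Your proposal is correct and follows essentially the paper's approach: the key step is the implication \emph{future convex $\Rightarrow$ $k$-future convex for all admissible $k$}, which is precisely the content of the inclusion $z_g^{(1)}\subset z_g^{(k)}$ in~\eqref{subsets} (and your eigenvalue/trace argument is the coordinate-free version of that). The only organizational difference is that you feed this directly into Theorem~\ref{thm1a}, whereas the paper presents Corollary~\ref{cor2a} as the restatement of Corollary~\ref{cor2} obtained via Remarks~\ref{slf}, \ref{nf}, and~\ref{lur}; you note this alternative route yourself, and the two are equivalent.
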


In concordance with the notation introduced in the proof of Corollary \ref{cor2}, we will call 
$z^{(k)}$ and $z$ the maximal subsets of $M$ satisfying  respectively the conditions in parts ii) of the theorem and 
the corollary above.\\

The results of this section will be used in two ways: i) to explain why k-CTMs
 cannot be found in a given  open black hole region $O$  of a  complete, 
explicitly known solution of Einstein's equations,   
or in partially known, incomplete spacetimes (as in numerical relativity contexts); 
ii) to set up past barriers 
in gravitational collapse which cannot be crossed by 
k-CTMs from their future side. \\
In case i) we show that $O$ is  foliated by 
k-future convex hypersurfaces, in case ii) a single such hypersurface, matching the event horizon 
to the future, works as a barrier. \\
A number of examples are developed in detail in Section \ref{Sapps}. 
Before proceeding to the examples, however, we need to show how 
to determine if a spacelike/null hypeurface is k-future convex. This is done in the next 
section.

\section{The $k-$future convex condition}\label{Skfc}

  In this section we solve the problem of determining if     condition \eqref{C1} in Definition \ref{dfn2}  is satisfied 
  at a point $p$ of a spacelike/null 
  hypersurface $\Sigma$. Note that, in an $n+1$ dimensional spacetime, $\Sigma$ is $n$ dimensional 
  and a tangent spacelike submanifold at $p$  can be of dimension $k=1, 2,..., n$ if $T_p\Sigma$ is spacelike, 
  $k=1,2,...,n-1$ if $T_p\Sigma$ is null.  The spacelike and null cases require separate treatments.\\

 \noindent
\emph{Case where $V=T_p\Sigma$ is spacelike:}\\ 
  If $X^a, Y^a$ are tangent to $\Sigma$ and $N^a$ is the unit future normal then (see \eqref{sffa} and the comments following it)
   \begin{equation}
  X^a Y^b \nabla_a N_b = \langle \nabla_X N, Y \rangle = -\langle \nabla_X Y, N \rangle = \langle \sff(X,Y),N \rangle.
  \end{equation}
   Since $\sff$ is symmetric, this proves 
  that  the restriction $K$ of the $(0,2)$ tensor $(\nabla_a N_b)$ to $V=T_p\Sigma$ is symmetric.  Let $e^a_i, i=1,2,...,n$ be 
a basis of $V$,
$K_{ij}$ and $h_{ij}$ the components in this basis of $K$ and of the restriction of the metric to $V$. 
Due to the symmetry of $K$ and the positive definiteness of $h_{ij}$, the 
$(1,1)$ \textit{shape tensor}  $h^{i k} K_{k  j}$  admits  an orthonormal 
basis of eigenvectors $z_A \in V, A=1,2,...,n$. These vectors point along the \textit{principal directions} and 
the associated  eigenvalues $\lambda_A$ are the \textit{principal curvatures} (of $\Sigma$, at $p$). 
They are the solutions of the equation $(\nabla_{z}N)^\top = \lambda z_A$. 
We will assume the basis is ordered 
 such that 
\begin{equation} \label{spec}
\lambda_1 \leq \lambda_2 \leq .... \leq \lambda_n
\end{equation}
Note that there could be degenerate eigenvalues associated to higher dimensional eigenspaces, an extreme case occurring 
when  $p$ is an umbilic point of $\Sigma$.  
Note also that the sum of the eigenvalues is 
\begin{equation}\label{trV}
\sum_{A=1}^n\lambda_A ={\rm tr}_V(K) = H_\Sigma^a \, N_a.
\end{equation}
$\Sigma$ is $n-$future convex at $p$ if the trace  \eqref{trV} (which is proportional to the mean curvature for the chosen orientation) is nonnegative.
Since $H_\Sigma^a$ is orthogonal to $\Sigma$, then parallel 
to $N^a$, we can rephrase this by saying that the $n-$future convex condition at $p$ is equivalent to: 
i) $H_\Sigma^a = \alpha N^a$, with $\a \leq 0$ (since  $N^a$ is timelike); ii) 
 $H_\Sigma^a|_p$ is past pointing; iii) 
 $\Sigma$ satisfies the \textit{past} trapping condition at  $p$.\\

Let us consider now proper subspaces of $V$. 
Since $V$ is spacelike,   any $k-$dimensional subspace, $1 \leq k \leq n$ will be spacelike and 
should be considered in \eqref{C1}.  We need   determine the minimum of the real 
function $W \to {\rm tr}_W(K)$ over the set  $Gr(k,V)$ of $k-$dimensional 
vector subspaces $W \subset V$; if this minimum is nonnegative, the $k-$future convex condition will be satisfied at $p \in \Sigma$. \\
For $0<k<n$, the \emph{Grassmannian}  $Gr(k,V)$ is a compact manifold of dimension $k(n-k)$. 
The manifolds $Gr(k,V)$ and $Gr(n-k,V)$ are diffeomorphic. A possible diffeomorphism can be defined  using 
\emph{any}  positive definite metric on $V$  (we will use the induced metric) by identifying 
\begin{equation} \label{dif}
Gr(k,V) \ni W \leftrightarrow W^\perp  \in Gr(n-k,V).
\end{equation}
As a consequence of the compactness of the Grassmannian manifolds, 
 the function ${\rm tr}_W(K):  G_r(k,V) \to \mathbb{R}$ in  \eqref{C1} 
reaches extreme values and,  since 
\begin{equation}\label{trd}
{\rm tr}_W(K) = {\rm tr}_{V}(K)-{\rm tr}_{W^\perp}(K), 
\end{equation}
it follows   that 
\begin{equation} \label{dif} \begin{split}
\text{max}|_{Gr(k,V)} {\rm tr}  (K) &={\rm tr}_{V}(K)- \text{min}|_{Gr(n-k,V)} {\rm tr}(K),\\
\text{min}|_{Gr(k,V)} {\rm tr} (K)  &=  {\rm tr}_{V}(K)- \text{max}|_{Gr(n-k,V)} {\rm tr}(K).
\end{split} \end{equation}
To find the minima, we consider first the problem of determining 
the stationary points of the real function 
\begin{equation} \label{rf}
Gr(k,V) \ni W \to {\rm tr}_W(K).
\end{equation}
Note from \eqref{trd} \eqref{dif}  that if $W$ is a stationary point (respectively local maximum, minimum) 
of the map $W \to {\rm tr}_W(K)$ on $Gr(k,V)$, then $W^\perp$ is a stationary point (respectively local minimum, maximum)  
of $U \to {\rm tr}_U(K)$ on $Gr(n-k,V)$. \\

To get some intuition on the stationary point  problem we analyze first the 
 $k=1$ case (which also solves the problem for $k=n-1$). A one dimensional subspace $W \subset V$ 
 can be characterized by a unit vector $z \in S^{n-1} \subset V$ where  $W=\text{span}\{ z \}$. 
This parametrization is redundant since $\pm z$ give the same $W$, so we are led to the well known description of the real projective space 
$Gr(1,V) = \mathbb{RP}^{n-1}= \rm{S}^{n-1}/\sim $, where  $\sim$ is the equivalence relation $z \sim -z$ on the unit sphere. We can avoid dealing with the complexities of 
this manifold by simply searching for the stationary points of the trace function on its cover  $S^{n-1}$, i.e., finding the stationary 
points of $K_{ij}z^i z^j$ over the set of unit vectors $z \in V$. This is best done by introducing a Lagrange 
multiplier $\lambda$ and extremizing the function $K_{ij}z^i z^j - \lambda (h_{ij} z^i z^j -1)$ 
with $z^j \in V$ unconstrained. The stationary condition then gives 
\begin{equation} \label{pev}
K_{ij} z^j = \lambda h_{ij}z^j,
\end{equation}
which, upon applying the inverse metric gives  the 
eigenvector problem
 \begin{equation} \label{ev}
K^i{}_j z^j = \lambda z^i.
\end{equation}
We conclude that if  $W=\text{span}\{ z \}$ is a stationary point of  ${\rm tr}_W(K):  G_r(1,V) \to \mathbb{R}$,  
then $z$ is a principal direction of $\Sigma$ at $p$. 
Given that the stationary points of the trace function on the set $Gr(1,V)$ of one dimensional subspaces occur at principal eigenspaces, 
  the extreme values are to be found among the $\lambda_A$'s. We conclude 
that,  $\text{min}|_{Gr(1,V)} {\rm tr} (K)=\lambda_1$, and that $\Sigma$ is $1-$future convex at $p$ if $\lambda_1$, and then all the $\lambda_A$'s, 
are nonnegative. This agrees with the standard notion of \textit{local convexity}, as 
given, e.g., in \cite{lima,bishop}. \\

From   \eqref{dif} and 
$\text{max}|_{Gr(1,V)} {\rm tr} (K)=\lambda_n$ follows that  
  $\text{min}|_{Gr(n-1,V)} {\rm tr} (K)=\sum_{A=1}^{n-1} \lambda_A$ and  $\text{max}|_{Gr(n-1,V)} {\rm tr} (K)=\sum_{A=2}^n \lambda_A$. 
  Note in pass that we have proved that the stationary points  of \eqref{rf} for $k=n-1$ are 
  the subspaces orthogonal to an eigenvector, and that these  subspaces are  $K$ invariant. \\
  
Now consider the problem of finding the stationary points of \eqref{rf} for 
$k=2, 3,..., n-2$. 
 In view of  what we found for   for $k=1$ and $n-1$ we may guess that,
 for arbitrary $k$,  if $W \in Gr(k,V)$ is a stationary point, then it is an invariant subspace of the shape tensor $K^i{}_j: V \to V$. 
 To prove this assertion, assume $W$ is a stationary point and let 
% \begin{equation}\label{obw}
   $\{ e_1, e_2,...,e_k \}$ 
% \end{equation}
be  an orthonormal basis of $W$. Consider the curve through $W$ in $Gr(k,V)$ given by
\begin{equation} \label{obwe}
\e \to W_\e= \text{span} \left\{ e_1, e_2,..., e_{s-1},\frac{e_s+\e u}{\sqrt{1+ \e^2}},e_{s+1},...,e_k \right\}, \;\; u \in W^\perp, \; \langle u,u\rangle=1.
\end{equation}
Note that $W=W_{\e=0}$, the basis of $W_\e$ in \eqref{obwe} is orthonormal and 
$ {\rm tr}_{W_\e} (K)= {\rm tr}_W (K)| + 2 \e K^i{}_j e_s^j  u_i + \mathcal{O}(\e^2)$. 
This shows that 
\begin{equation}
\left. \frac{d}{d \e}   {\rm tr}_{W_\e} (K) \right|_{\e=0} = 2  K^i{}_j \, e_s^j \, u_i,
\end{equation}
which  is zero if $K^i{}_j \, e_s^j $ is orthogonal to $u^i$. Since $u^i$ is an arbitrary unit 
vector in $W^\perp$ we conclude that $K^i{}_j \, e_s^j \in W$ and since this is the case for any $s=1,2,...,k$ 
we conclude that  $K(W) \subset W$, as we wanted to show.  \\

The restriction of $K_{ij}$ to an arbitrary subspace $U \subset V$ is symmetric. Consequently, the 
associated $U \to U$ operator obtained by raising an index of $K$  with the (inverse of) the positive definite induced metric in $U$ 
admits a basis of eigenvectors. These eigenvectors are found by solving the  equation 
\begin{equation}\label{proj}
{P_{\text{\tiny U}}}^i{}_l\,  K^l{}_j\,  v^j = \mu  \,  v^i, \;\;\; v^j \in U,
\end{equation}
where $P_{\text{\tiny U}}$ is the orthogonal projector $V \to U$.  
The spectrum of eigenvalues $\mu$ is, in general, not a subset of the 
$\lambda_A$'s in \eqref{spec} (take, e.g,  the case where $V$  is two dimensional, $\lambda_1 \neq \lambda_2$ and 
$U$ is a one dimensional subspace along a vector not aligned with an eigenvector). However, in the particular case where $U$ is $K-$invariant, the projector 
in \eqref{proj} is not needed and the $\mu's$ are a subset of the $\lambda_A$'s. 
Since the stationary points of the trace function are $K$ invariant subspaces, we arrive then at the conclusion that, for $k=1,2,...n$, 
\begin{equation}\label{maxmin}
\text{min}|_{Gr(k,V)} {\rm tr} (K) =\textstyle  \sum_{A=1}^k \lambda_A, \;\;\; \text{max}|_{Gr(k,V)} {\rm tr} (K) = \sum_{A=n-k+1}^n \lambda_A.
\end{equation}
In particular, the condition for $\Sigma$ to be $k-$future convex at $p$ is that the sum of the lowest $k$ eigenvalues of 
the linear operator $H^{-1}K$ on  $T_p\Sigma$ (principal curvatures, equation \eqref{spec}), be nonnegative. \\

\noindent
\emph{Case where $V=T_p\Sigma$ is null:}\\
If $V$ is null there is a one dimensional vector subspace $\text{span}\{e_n\} \subset V$ with $e_n$ null and 
orthogonal to every vector in $V$. Let $V_o$ be a \emph{section} of $V$, that is, an $(n-1)$ dimensional vector subspace 
such that the restriction of the metric  to $V_o$ is positive definite (\emph{any} $(n-1)$ dimensional subspace not containing $e_n$ will do). 
We have a direct sum decomposition 
\begin{equation}\label{section}
V = V_o \oplus \text{span}\{e_n\}, 
\end{equation}
so that any vector in $V$ can be uniquely written as $v=v_o + \alpha e_n$ with $v_o \in V_o$. Call 
$\pi: V \to V_o$ the canonical projection $\pi(v)=v_o$. Note that (peculiarities of degenerate ``metrics''...),  in spite of being a projection,  
$\pi$ is an  ``isometry'',  in the sense that $\langle v_o+ \a e_n, u_o+ \beta e_n\rangle = \langle v_o,u_o \rangle$. 
 Note also that, since $K( \cdot, e_n)=0$, 
\begin{equation} \label{pk}
K(u,v)=K(\pi(u),\pi(v)), \;\; \forall \; u, v \in V.
\end{equation}
The restriction $K_o$ of $K$ to $V_o$ is symmetric and the restriction $h_o$  of the metric is positive definite, so there 
  exists an eigenbasis of the $V_o \to V_o$ operator $h_o^{-1} K_o$,  with eigenvalues 
  \begin{equation}\label{nl}
  \lambda_1 \leq \lambda_2 \leq ... \leq  \lambda_{n-1}.
  \end{equation}
The set of $k-$dimensional subspaces of $V$ \emph{with positive definite induced metric}  
is  an open subset $\widetilde{Gr}(k,V)\subset  Gr(k,V)$ of the Grassmannian (e.g., 
 $\widetilde{Gr}(1,V)=Gr(1,V) \setminus \text{span}\{ e_n \}$, that is, $Gr(1,V)$ with a point 
 removed). Real functions on these sets 
  are not a priori guaranteed to reach extrema. We will see, however, that 
 the function at we are analyzing,
 \begin{equation} \label{rf2}
 \widetilde{Gr}(k,V) \ni W \to {\rm tr}_W(K),
 \end{equation}
  does. \\
 
 The maximum dimension for a  subspace of $V$ with positive metric is $k=n-1$. As a consequence of \eqref{pk}, 
 the trace function on $\widetilde{Gr}(n-1,V)$  is a constant (called \textit{null mean curvature} in \cite{Galloway1,Galloway2}). To prove this, 
  take an arbitrary  $\widehat V_o \in \widetilde{Gr}(n-1,V)$ and let 
 $\{ \hat e_1, ... , \hat e_{n-1} \}$ be an orthonormal basis of $\widehat V_o$. Since $\pi$ is an isometry, 
 the set of $e_j= \pi \hat e_j$ is an orthonormal basis of $V_o$ and, in view of \eqref{pk}
 \begin{equation} \label{ptr}
  {\rm tr}_{\hat V_o} (K) = \sum_{j=1}^{n-1} K(\hat e_j , \hat e_j) = \sum_{j=1}^{n-1} K( e_j , e_j) = {\rm tr}_{V_o}(K) 
  =\sum_{A=1}^{n-1} \lambda_A, 
  \end{equation}
  which, as anticipated, is independent of $\widehat V_o$. \\
  
 Consider now the case  $1 \leq k \leq n-2$. If $\widehat W \in \widetilde{Gr}(k,V)$ and $\{ \hat e_1,...,\hat e_k \}$ is an orthonormal 
 basis of $\widehat W$, then, as in \eqref{ptr} 
 \begin{equation}
 {\rm tr}_{\widehat W} (K) = \sum_{j=1}^k K(\hat e_j , \hat e_j) = \sum_{j=1}^k K(\pi(\hat  e_j) , \pi(\hat e_j)) = {\rm tr}_{\pi (\widehat W)}(K)
 \end{equation}
 where $\pi(\widehat W)$ is a $k-$dimensional  subspace of $V_o$. Thus, for $1 \leq k \leq n-2$, the extreme values of the function  \eqref{rf2} 
 agree with those of 
 \begin{equation} \label{rf0}
 Gr(k,V_o) \ni W \to {\rm tr}_W(K).
 \end{equation}
This leads us back to the problem of finding the extrema of \eqref{rf}, where now $V$ should be replaced with $V_o$. 
 We conclude that, for $k=1,2,...,n-1,$ 
 \begin{equation}\label{maxmin2}
\text{min}|_{\widetilde{Gr}(k,V)} {\rm tr} (K) =\textstyle  \sum_{A=1}^k \lambda_A.
\end{equation}
where the $\lambda_A$'s were defined in the paragraph leading to \eqref{nl}. 
We gather our results in the following
\begin{prop}
\hfill
\begin{enumerate}[i)]
\item Assume the hypersurface $\Sigma$ is spacelike at $p$ with induced metric $h$ and let   $\lambda_1 \leq \lambda_2 ... \leq \lambda_n$ be the 
eigenvalues of the shape tensor $h^{-1}(\nabla N|_{T_p \Sigma \otimes T_p \Sigma}):
 T_p \Sigma \to T_p \Sigma$. For $k=1,2,...,n$,  $\Sigma$ is $k-$future convex at $p$ iff 
$\sum_{A=1}^k \lambda_A \geq 0$.
\item Assume the hypersurface $\Sigma$ is null at $p$.  Let $V_o$ be any section ($n-1$ dimensional spacelike subspace) of  $T_p \Sigma$, 
$h_o$ its induced metric 
and $\lambda_1 \leq \lambda_2 ... \leq \lambda_{n-1}$  the 
eigenvalues of $h_o^{-1} (\nabla N|_{V_o \otimes V_o}): V_o \to V_o$.
 For $k=1,2,...,n-1$,  $\Sigma$ is $k-$future convex at $p$ iff 
$\sum_{A=1}^k \lambda_A \geq 0$.
\item If $T_p \Sigma$ is $k_o-$ future convex then it is $k-$future convex for $k>k_o$
\end{enumerate}
\end{prop}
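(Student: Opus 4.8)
The plan is to obtain parts i) and ii) as immediate re-readings of the extremization already carried out in this section, and to establish the monotonicity claim iii) by an elementary argument on ordered eigenvalues.

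For part i), note that by Definition \ref{dfn2} the hypersurface $\Sigma$ is $k$-future convex at $p$ precisely when condition \eqref{C1} holds for \emph{every} $k$-dimensional spacelike subspace $W \subset V = T_p\Sigma$. Since the left-hand side of \eqref{C1} is exactly ${\rm tr}_W(K)$, this is the statement that $\min_{W \in Gr(k,V)} {\rm tr}_W(K) \geq 0$. In the spacelike case every $k$-dimensional subspace is spacelike, so no subspaces are excluded, and equation \eqref{maxmin} identifies this minimum with $\sum_{A=1}^k \lambda_A$. Hence $k$-future convexity at $p$ is equivalent to $\sum_{A=1}^k \lambda_A \geq 0$. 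Part ii) has the identical structure: the admissible subspaces are now those of $\widetilde{Gr}(k,V)$ (the ones carrying a positive-definite induced metric), the trace on each is computed through the projection $\pi$ onto a section $V_o$ by \eqref{pk}, and equation \eqref{maxmin2} gives $\min_W {\rm tr}_W(K) = \sum_{A=1}^k \lambda_A$ with the $\lambda_A$ the eigenvalues \eqref{nl} of $h_o^{-1}K_o$. So neither i) nor ii) needs any new work beyond invoking \eqref{maxmin} and \eqref{maxmin2}.

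For part iii), I would argue purely from the ordering of the eigenvalues, using i) and ii) to translate convexity into inequalities. Suppose $\Sigma$ is $k_o$-future convex at $p$, i.e. $\sum_{A=1}^{k_o}\lambda_A \geq 0$. Because $\lambda_1 \leq \cdots \leq \lambda_{k_o}$, the largest of these satisfies $k_o\,\lambda_{k_o} \geq \sum_{A=1}^{k_o}\lambda_A \geq 0$, whence $\lambda_{k_o} \geq 0$. For any $k > k_o$ each further eigenvalue obeys $\lambda_A \geq \lambda_{k_o} \geq 0$ for $A = k_o+1, \ldots, k$, so $\sum_{A=1}^k \lambda_A = \sum_{A=1}^{k_o}\lambda_A + \sum_{A=k_o+1}^k \lambda_A \geq 0$; by i) and ii) this is exactly $k$-future convexity at $p$. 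Only the ordering of the $\lambda_A$ enters, so the same chain of inequalities covers both the spacelike range $k \leq n$ and the null range $k \leq n-1$.

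I do not anticipate a real obstacle: the Proposition is essentially a repackaging of the Grassmannian extremization, and the one genuinely new ingredient, the monotonicity in iii), collapses to the remark that once the smallest $k_o$ eigenvalues have nonnegative sum their maximum $\lambda_{k_o}$ is already nonnegative, after which appending larger eigenvalues cannot decrease the sum. The only point meriting a line of care is confirming that ``$k$-future convex at $p$'' is to be read as nonnegativity of the \emph{minimal} trace over the relevant Grassmannian rather than of some particular trace; this is immediate from the universal quantifier over $W$ in Definition \ref{dfn2}.
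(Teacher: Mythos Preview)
Your proposal is correct and follows essentially the same approach as the paper: parts i) and ii) are read off from the Grassmannian extremization culminating in \eqref{maxmin} and \eqref{maxmin2}, and part iii) is deduced from the ordering of the eigenvalues by first noting $\lambda_{k_o}\geq 0$ and then that all subsequent eigenvalues are nonnegative. Your argument for iii) is slightly more explicit than the paper's (you spell out $k_o\lambda_{k_o}\geq\sum_{A=1}^{k_o}\lambda_A$), but the logic is identical.
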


\begin{proof} The only remaining proof is that of iii). If $T_p \Sigma$ is $k_o-$future convex then $\sum_{A=1}^{k_o} \lambda_A \geq 0$. 
In view of equations \eqref{spec} and \eqref{nl} it must be $\lambda_{k_o} \geq 0$ and then $\lambda_A \geq 0$ for $A >k_o$. 
This guarantees that $\sum_{A=1}^{k} \lambda_A \geq 0$ for $k>k_o$.
\end{proof}

As explained in Remark \ref{nf} above, a null hypersurface $\Sigma$ can locally be regarded as a leaf of  the null foliation given by the level sets of 
a function $g$  with $\nabla^a g$ future null. Let $N^a=\nabla^a g$, $p \in \Sigma$,  
 $V_o$ be a section of $V=T_p \Sigma$, $e_i^a, i=1,2,...,n-1$ a basis of $V_o$ and  ${h_o}_{ij}= e^a_i e^b_j g_{ab}$ the induced metric, with inverse 
 $h_o^{i j}$. Take $e_n^a = N^a|_p$ 
in \eqref{section}. Let $e_{n+1}^b$ be the only null vector in $T_p M$ orthogonal to $V_o$  and satisfying $e_{n+1}^a e_n^b g_{ab}=-1$, 
then 
\begin{equation}\label{dec+2}
g^{ab}|_p = h_o^{ij} e_i^a e_j^b -e_{n}^a e_{n+1}^b-e_{n+1}^a e_n^b
\end{equation}
Now, at $p$,  $e_{n}^a e_{n+1}^b \nabla_a N_b=  e_{n+1}^b N^a \nabla_a N_b=0$ and 
$e_{n+1}^a e_n^b \nabla_a N_b g = \frac{1}{2} e_{n+1}^a  \nabla_a (\nabla^b g \nabla_b g)=0$ \cite{Dotti}.
As a consequence, using equation \eqref{dec+2} we find that, at $p$,
\begin{equation}\label{div}
\nabla_a N^a = \Box g = g^{ab} \nabla_a \nabla_b g =  h_o^{ij} e_i^a e_j^b \nabla_a \nabla_b g.
\end{equation}
This scalar is the expansion $\theta$ of the null congruence $\nabla^a g$ (called null mean curvature in \cite{Galloway1,Galloway2}). 
The calculation above shows that  $h_o^{ij} e_i^a e_j^b \nabla_a N_b$ is independent of the selected section $V_o \subset T_p \Sigma$ 
(as we noticed before, see the paragraph above equation \eqref{ptr}, see also equation (4) in \cite{Dotti}), and 
proves the following
\begin{prop}\label{pbox}
 Assume $\nabla^a g$ is future null. The level sets of $g$ in the open set defined by 
 the condition $\Box g>0$ are strictly $(n-1) $- future convex.
 \end{prop}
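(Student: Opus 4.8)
The plan is to read the statement off as an immediate consequence of the preceding Proposition (part ii) together with equation~\eqref{div}: the only substantive point is to identify the top-dimensional trace of the shape tensor with $\Box g$ and to verify that the open condition $\Box g>0$ matches \emph{strict} $(n-1)$-future convexity.

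First I would fix a point $p$ in the open set $\{\Box g>0\}$ and set $N^a=\nabla^a g$, which is future null by hypothesis; the level set $\Sigma$ through $p$ is then a null hypersurface and $N^a$ is its future null normal, tangent to $\Sigma$. By the strict variant of Definition~\ref{dfn2}, $\Sigma$ is strictly $(n-1)$-future convex at $p$ precisely when $\mathrm{tr}_W(\nabla N)>0$ for every $(n-1)$-dimensional spacelike subspace $W\subset T_p\Sigma$. Since $n-1$ is the maximal dimension of a spacelike subspace of the null space $V=T_p\Sigma$, these subspaces $W$ are exactly the sections $V_o$ considered in this section, so the relevant object is the map $W\mapsto \mathrm{tr}_W(\nabla N)$ on $\widetilde{Gr}(n-1,V)$.

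Next I would invoke equation~\eqref{ptr}, which shows that this map is in fact \emph{constant}: for any section, $\mathrm{tr}_{V_o}(\nabla N)=\sum_{A=1}^{n-1}\lambda_A$, independent of the chosen $V_o$. Equation~\eqref{div} then identifies this common value with the d'Alembertian, so that $\sum_{A=1}^{n-1}\lambda_A=\mathrm{tr}_{V_o}(\nabla N)=\Box g$ at $p$. Combining the two, the strict $(n-1)$-future convex inequality—tested against this single constant—is equivalent to $\Box g|_p>0$, which holds throughout the region by assumption; this is the assertion of the Proposition.

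The main (and essentially the only) input is already supplied in the derivation of \eqref{div}: because $N^a=\nabla^a g$ is null throughout the region, $N^aN_a\equiv0$, and the symmetry of the Hessian $\nabla_a\nabla_b g$ yields both $N^a\nabla_a N_b=\tfrac12\nabla_b(N^aN_a)=0$ and $N^b\nabla_a N_b=\tfrac12\nabla_a(N^bN_b)=0$. These two identities annihilate the cross terms $e_n^a e_{n+1}^b\nabla_a N_b$ and $e_{n+1}^a e_n^b\nabla_a N_b$ in the decomposition \eqref{dec+2} of $g^{ab}|_p$, so that $\Box g=g^{ab}\nabla_a N_b$ collapses to the pure section trace $h_o^{ij}e_i^a e_j^b\nabla_a N_b$. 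Granting this, no estimates remain: the whole claim is the bookkeeping identity between the maximal-dimensional shape-operator trace and $\Box g$, with the open inequality $\Box g>0$ supplying strictness.
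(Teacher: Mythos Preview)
Your proposal is correct and follows essentially the same route as the paper: the paper's proof is contained in the paragraph immediately preceding the Proposition, where the decomposition \eqref{dec+2} together with the vanishing of the two cross terms yields \eqref{div}, identifying the section trace $h_o^{ij}e_i^a e_j^b\nabla_a N_b$ (for any section $V_o$) with $\Box g$; strict $(n-1)$-future convexity is then literally the condition $\Box g>0$. Your additional citation of \eqref{ptr} for section-independence is harmless but redundant, since \eqref{div} already delivers this (the left side being manifestly section-free).
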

 Consider now a codimension two spacelike surface $S$ tangent at the $g$ level set $\Sigma$ through  $p$ and take 
$V_o=T_pS$. Equations \eqref{desde} and  \eqref{div} combine to give equation (22)  in \cite{Dotti},
\begin{equation}
\Box g - \Delta_S g = H^c \nabla_c g \;\; \text{at } p.
\end{equation}
Corollaries 1.1 and 1.2  in \cite{Dotti} now follow from Proposition \ref{pbox} as a particular case of Theorem \ref{thm1} above for $k=(n-1)$ future convex null hypersurfaces.

\section{Applications} \label{Sapps}
Theorem \ref{thm1} ii) or its reformulation Theorem \ref{thm1a} ii) can be used   to find space-time open 
sets (possibly the whole spacetime) whose geometry prevents the formation of CTMs of  specific dimensions, a prediction 
that, leaving aside the intuition gained by testing explicitly with highly symmetric closed submanifolds, is not affordable by direct calculations. 
Examples  of regions free of   CTSs  detected by using null foliations can be found in \cite{Dotti}. Further examples, using spacelike/null 
foliations are given below.
 We are particularly interested in 
the possibility that  higher dimensional  CTMs  are not allowed in open sets where there exist 
lower dimensional CTMs; in the particularly relevant case of 3+1 dimensions, the possibility of 
finding TLs where there are no CTSs. \\

Theorem \ref{thm1} i) or its reformulation Theorem \ref{thm1a} i) can be used to 
 find \textit{barriers}: spacelike/null hypersurfaces that cannot be traversed by a CTM  from its future side 
 (this admits variations with ``future'' replaced with ``past''). 
 Examples of barriers  using null hypersurfaces can be found in \cite{Dotti}. 
 An early example of a (spacelike) barrier for CTS, in 3+1 dimensions in Vaidya spacetime can be found in \cite{bd}. This was 
  generalized in \cite{Bengtsson:2010tj} to  spherical collapse spacetimes. 
 We prove below in Example 6 that the CTS barrier in \cite{Bengtsson:2010tj} acts also as a TL barrier.

\begin{ex}
    Consider a static spacetime, $M^{n+1}= \Sigma^n \times \mathbb{R}_t$, 
     \begin{equation}
     ds^2 = -\alpha(x) dt^2+ h_{ij}(x) dx^i dx^j.
\end{equation}
Here  $\a(x)>0$ and $h_{ij}(x)$ is positive definite. 
    Time orient $M$  such that $\p_t$ is future. 
Take $g=-t$, then $\nabla^a g \; \p_a =\frac{1}{\a}\; \p_t$ is future directed  and the restriction of $\nabla_a \nabla_b g$ to the tangent spaces of 
the $g$ level sets   its  vanishes identically.   Taking  $Z_g=z_g=M$ in  Corollary \ref{cor2} ii) we learn  that $M$ contains no CTMs of any dimension. 
\end{ex}

\begin{ex}
    Consider FLRW cosmology in $n+1$ dimensions, $ds^2=-dt^2+ a^2(t) h_{ij}(x) dx^i dx^j$, where $a(t)>0$ and $h_{ij}(x) dx^i dx^j$ is either 
the    unit $S^n, H^n$ or $\mathbb{R}^n$. Time orient $M$  such that $\p_t$ is future. Take $g=-t$ so that $\nabla^a g \; \p_a = \p_t$ is future timelike.
Assume there is an expansion era $E$, $t<t_o$, where $\dot a>0$, followed by a contraction era $C$, $t>t_o$, where  $\dot a <0$. 
 In the $\p_{x^j}$ basis 
the restriction of $\nabla_a \nabla_b g$ to $g$ level sets is $a \dot a h_{ij}(x)$ so that we may use  $Z_g= E$ in  
Corollary \ref{cor2} ii) and prove that no CTM of any dimension is included in $E$. 
Moreover, the spacelike hypersurface $\Sigma_o$ defined by $t=t_o$ acts as a past barrier that prevents CTMs from entering  $E$: 
although it is possible that a CTM $S \subset C$, it is not possible that a CTM intersects $E$. Otherwise, 
$g|_S$ would reach a local maximum in $E$, contradicting Corollary \ref{cor2} i).
\end{ex}

\begin{ex} Consider Kasner's cosmology
\begin{equation}\label{kasner}
ds^2 = - dt^2 + \sum_{j=1}^3 t^{2p_j} (dx^j)^2, \; t>0, x^j \in \mathbb{R}, \;\; \p_t \; \text{ future.}
\end{equation}
This is a solution of  Einstein's vacuum field equation if
\begin{equation} \label{ps}
\sum_j p_j= 0, \;\; \sum_j p_j{}^2=1.
\end{equation}
Equations  \eqref{ps} describe the intersection of a plane with a unit sphere in $\mathbb{R}^3 = \{ (p_1,p_2,p_3) \}$,  
we discard the solution $p_1=1, p_2=p_3=0$ and its permutations 
since they give (part of)  Minkowski spacetime. Under these further restrictions,  
any solution of \eqref{ps} has two positive and one negative $p_j$, and 
 $-1/3 \leq  p_j < 1$ for every $j$. We will order the $p_j$'s such that 
 \begin{equation} \label{ppp}
-\tfrac{1}{3} \leq  p_1<0 < p_2 \leq p_3 <1.
 \end{equation}
 
 Let $\Sigma$ be a level set of $t$ and  $N_b= -\nabla_b t$. 
 In the orthonormal basis $e_j= t^{-p_j} \p_{x^j}$ of $T\Sigma$ 
  \begin{equation}
 (\nabla_a N_b)|_{T\Sigma}= \text{diag}\left(\frac{p_1}{t}, \frac{p_2}{t}, \frac{p_3}{t} \right).
 \end{equation}
 Since $p_1<0$, $p_1+p_2=1-p_3>0$ and $p_1+p_2+p_3=1>0$ we conclude that the 
 $t=const.$ hypersurfaces  are $3$-future convex and $2$-future convex but not $1$-future convex. Note 
 that this foliation is global, then we can assure that there are no CTS in Kasner spacetime. 
 TL, however, are not forbidden \textit{by the existence of this foliation}. 
 Consider, however the eikonal equation 
 \begin{equation}
 0 = g^{a b} \nabla_a \nabla_b g = - (\p_t g)^2 + \sum_j t^{-2 p_j} (\p_{x^j} g)^2.
 \end{equation}
 This admits separable solutions with $\nabla^a g$ future:
 \begin{equation}\label{kfl}
 g= \sum_j A_j x^j - \int^t_{t_o} \sqrt{\textstyle{ \sum_j} (A_j \; t^{-p_j})^2} dt,
 \end{equation}
which can be rescaled so that   $\sum_j A_j^2=1$. Take 
 $A_1=1$, $A_j=0$ for $j>1$ and $t_o=0$, then $g=x^1-t^{1-p_1}/(1-p_1)$, $dg = dx^1 -t^{-p_1} dt$. 
 A pseudo orthonormal ($e_3$ is null) basis of vector fields  tangent to the $g-$level sets is 
 \begin{equation}
 e_1= t^{-p_2} \p_{x^2}, \; e_2= t^{-p_3} \p_{x^3}, \; e_3 =(\nabla^a g) \; \p_a = t^{-2p_1} \p_{x^1} + t^{-p_1} \p_t.
 \end{equation}
 At any point $p$ in a particular level set $\Sigma$ we may choose the section $V_o= \text{span}\{ e_1, e_2\} \subset T_p\Sigma$. The induced metric 
is ${(h_o)}_{ij}=g_{ab}e^a_ie^b_j=\text{diag}(1,1)$ and 
 \begin{equation}
( \nabla_a \nabla_b g)e^a_ie^b_j = \text{diag} (p_2 t^{-(1+p_1)}, p_3  t^{-(1+p_1)}).
\end{equation}
Since $\{ e_1, e_2 \}$ is an orthonormal basis of $V_o$, the eigenvalues of $(h_o)^{-1} \nabla \nabla g$ can be read off from this equation. 
Since $p_2$ and $p_3$ are positive, we conclude that $\Sigma$ is future convex. The fact that the entire spacetime is foliated  by 
future convex null hypersurfaces guarantees that no CTM \textit{of any dimension } is allowed. This rules out the possibility 
of finding TL, which was left open by the foliation by $t=$constant spacelike hypersurfaces. \\

Consider now the curve $c$
\begin{equation}
s \to (t=t_o,x^1=t_o^{-p_1} s , x^2=x^2_o,x^3=x^3_o),
\end{equation}
which has unit tangent $v=t_o^{-p_1} \p_{x^1}$ (so that $s$ measures length).  Since the $x^1$ direction 
contracts towards the future, we expect this curve to be trapped. 
A calculation indeed shows that 
\begin{equation} \label{ncl}
H = - \nabla_v v = -\frac{p_1}{t_o} \, \p_t,
\end{equation}
proving that this is an  \emph{open} trapped curve in Kasner spacetime. 
In the search of a  spacetime with no CTSs but admitting TLs,  
we may consider compactifying 
Kasner spacetime  in the  $x^1$ direction so that  $x^1 \sim x^1 + L$ (let us call the resulting spacetime c-Kasner). 
The curve  \eqref{ncl} is  an example of a  TL  in  c-Kasner spacetime, a 3+1 spacetime where CTS are forbidden. 
 The impossibility of CTS in c-Kasner follows, as in Kasner, from the fact that the $t-$level sets are 2-future convex. 
 The function $g$ in \eqref{kfl} used to rule out TL in Kasner, however, is not defined  on c-Kasner, since it is not periodic in $x^1$. 
Regarding the potential  implications  of the existence of TL in c-Kasner, in view of Galloway and Senovilla's singularity theorem reproduced in section \ref{Sintro},  
we  note that 
 both Kasner and c-Kasner are future null geodesic complete [Proof: if $s$ is an affine parameter and 
$A_k$ the conserved quantity associated to the Killing vector $\p_{x^k}$, then $ds/dt = 1/\sqrt{\sum_k (A_k t^{-p_k})^2} 
\sim t^{p_{k_o}}$ for large $t$ (here $k_o$ is the smallest $k$ such that $A_k \neq 0$ and we used \eqref{ppp}). It follows that 
 $s(t) \sim t^{1-p_{k_o}} \to \infty$ as $t \to \infty$] and conclude  that some of the  
  hypothesis in this theorem are not satisfied. Indeed,  although the $t=$constant hypersurfaces of c-Kasner are non-compact Cauchy surfaces, 
 the energy condition \eqref{xec} is violated (of course, this also is the case of non compactified Kasner). 
 To prove this note that the non trivial components of the Riemann tensor are
 \begin{equation}
 R_{titi} = t^{2(p_i-1)} p_i (1-p_i), \;\;\; R_{ijij} = t^{2(p_i+p_j-1)} p_i p_j,
 \end{equation}
 then, at any point, for   the null vector $N= \p_t + \cos(\a)  \, t^{-p_2} \p_{x^2}+ \sin(\a) \, t^{-p_3} \p_{x^3}$ and the orthogonal spacelike vector 
  $e=t^{-p_1} \p_{x^1}$,  we find that 
\begin{equation}
R_{abcd} N^a e^bN^ce^d=  \frac{p_1}{t^2} \left[ p_2 (1+\cos^2(\a) + p_3 (1+\sin^2(\a) )\right] <0.
 \end{equation}
 \end{ex}

\begin{ex} \label{exkrus}
    Consider a Kruskal-like manifold $M=\{(U,V) \in O \subset \mathbb{R}^2\} \times \rm{S}^2$, with
\begin{equation}\label{km}
ds^2 = -p(UV) \, dU \, dV +r^2(UV) d\Omega^2,
\end{equation}
Here $d\Omega^2$ is the metric on the unit $2-$sphere $\rm{S}^2$, 
and  $p$ and $r$ are only restricted to be positive functions of the product $UV$, with $r'<0$. 
\textit{We will not assume that \eqref{km} satisfies any field equation and/or asymptotic condition}. 
We time orient $M$ such that the null vector field $\p_V$ is future pointing. \\

Let $g=-U$, then $(\nabla^a g) \, \p_a = \frac{2}{p(UV)} \p_V$ is future null everywhere. 
We find
\begin{equation}
    \Box g = \frac{4\, r'(UV) \ U}{r(UV)\, p(UV)}.
\end{equation}
In view of Proposition \ref{pbox}, the $g$ level  sets in   $\{(U,V,\theta,\phi) \; | \; U<0 \}=Z_g^{(2)}$ are $2-$future convex,   
then no CTSs are allowed within this set.  The $U=0$ level set $\Sigma_o$ 
acts as a barrier that keeps CTSs  from entering $Z_g^{(2)}$ from its future side [Proof: assume $S$ is a CTS and $S \cap Z_g^{(2)}= \tilde S \not =  \emptyset$. Note that 
$g=-U$ is globally defined and $g|_S$ must reach a global maximum. The maximum should be attained at $\tilde S$, but this is not possible 
by Theorem \ref{thm1}. i).] 
A natural question is whether TLs are allowed within $Z_g^{(2)}$. To answer this, we analyze the k-future convex condition for the $g=-U_o$ level sets. 
The tangent space of the level set  is spanned by the pseudo-orthonormal basis $e_1 = r^{-1} \p_\theta$, $e_2=(r \sin(\theta))^{-1} \, \p_\phi$, $ e_3=\p_V$, 
and the restriction of $\nabla_a \nabla_b g$ to the section $V_o = \text{span}\{e_1,e_2 \}$ of the tangent space has components
\begin{equation}
(\nabla_a \nabla_b g) e^a_i e^b_j= \left( \begin{array}{cc}
\frac{2 \,U_o \, r'(U_oV)}{p(U_oV) \, r(U_oV)}&0\\
 0& \frac{2\, U_o \,r'(U_oV)}{ p(U_oV)\,r(U_oV)} 
\end{array} \right).
\end{equation}
Since $r'<0<p, r$ everywhere, we find that both eigenvalues are positive if $U_o<0$ and conclude that  $z_g=z^{(2)}_g=\{ (U,V,\theta,\phi) \; | \; U<0 \}$.
  Thus, as happens with CTSs,  no TL 
 can intersect the $U<0$ region, 
the $U=0$ 
null hypersurface acts as a barrier that keeps  CTMs to its future side. For this reason we call $r(0)=r_H$ ($H$ for ``horizon'') in Figure \ref{kfm}. 
Note that this analysis holds true regardless the asymptotic behavior of \eqref{km}.
No asymptotic simplicity, field equation or energy condition was assumed. 
We have only 
assumed that $r'<0<r,p$.\\

A similar analysis,  working with the function $k=V$,  which has a \emph{past} null gradient everywhere, 
shows that no closed \emph{past-trapped} submanifold enters the $V>0$ set. 
In particular, there are neither future nor past trapped submanifolds in quadrant $I$ of this spacetime (refer to 
Figure \ref{kfm}). Since the map $Q:(U,V,\theta,\phi) \to (-U,-V,\theta,\phi)$ is an 
isometry \emph{that reverses the time orientation}, the image under $Q$ of a 
future trapped submanifold is a past trapped submanifold and vice-versa, so we conclude 
that no future or past trapped submanifolds are allowed in quadrant  $I'$ either.\\

\begin{figure}
\includegraphics[scale=0.6]{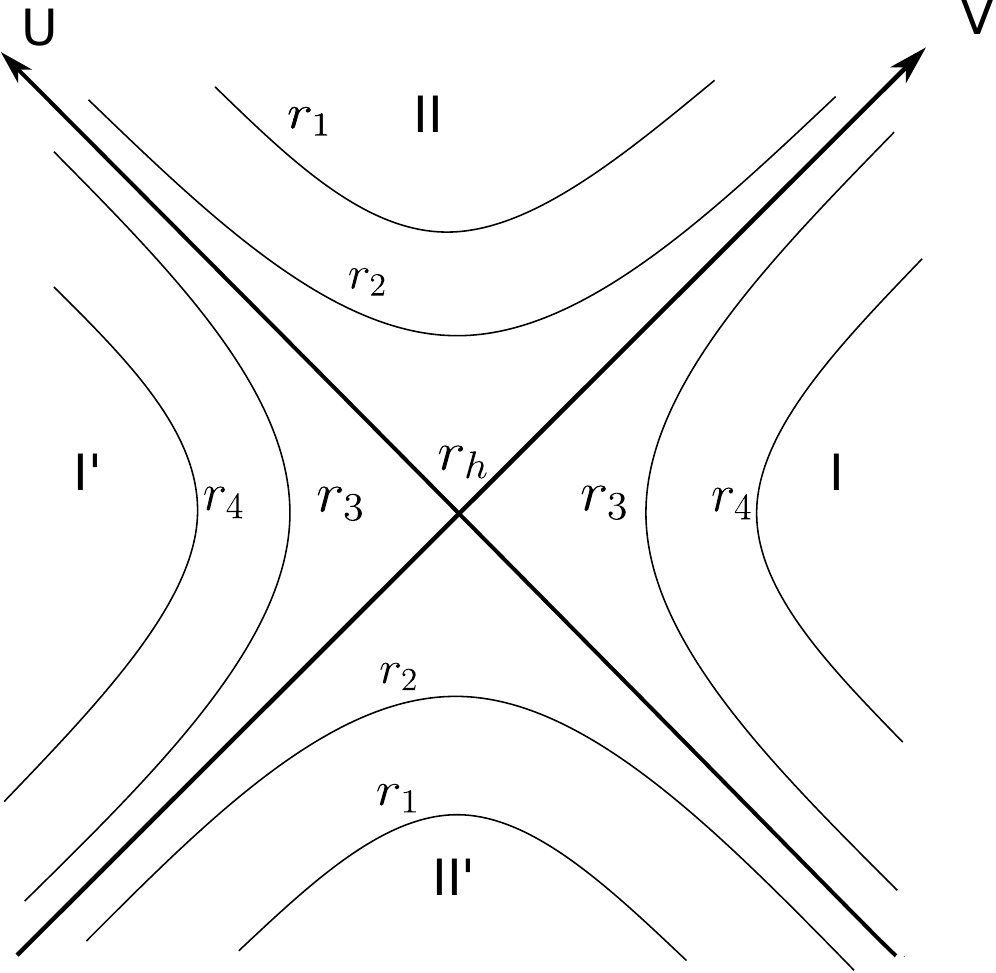}
\caption{A spacetime diagram for the metric \eqref{kfm}. Each point represents a sphere of area radius $r(UV)$.} \label{kfm}
\end{figure}
\end{ex}

Example 5 below  exhibits an open region  within a Schwarzschild black hole $\mathcal{B}$  where CTSs are not allowed. 
Of course, CTSs do  occur in $\mathcal{B}$ (e.g., any standard sphere, i.e., orbit under the $SO(3)$ isometry subgroup),  
the region in this example was cut out from $\mathcal{B}$ in such a way that its shape prevents  trapped surfaces from closing. 
This example  comes from  \cite{wi} (see also \cite{Dotti}), where it was exhibited to show  that Cauchy hypersurfaces  might pile up 
forming open sets that elude CTS.

\begin{ex} \label{wix}
 Consider the black hole region $\mathcal{B}$,  
$r<2M$, of a Schwarzschild's spacetime. We work in 
$(t,r,\theta,\phi)$ coordinates  and define 
\begin{equation} \label{gk}
    g = \theta + \int \frac{dr}{\sqrt{r (2M-r)}},
\end{equation}
for which $\ell^a =\nabla^a g = -\frac{2M-r}{r \sqrt{(2M-r)r}} \p_r + \frac{1}{r^2} 
\p_\theta$ is future null in $\mathcal{B}$. The tangent space to the $g$ level sets 
is spanned by the pseudo orthonormal basis  $e_1=(-f)^{-1/2} \;
\p_t$, $e_2= (r \sin \theta)^{-1} \, \p_\phi$ and $e_3= \nabla g$. 
Consider the section $V_o= \text{span}\{ e_1, e_2\}$. 
In this basis, 
\begin{equation} \label{gab}
(\nabla_a \nabla_b g)e^a_i e^b_j = \left( \begin{array}{cc} 
 \frac{M}{r^3 (-f)^{3/2}} & 0 \\
 0 & \frac{(r-2M) \sin \theta + \sqrt{r(2M-r)} \cos \theta}{r^4 \sqrt{r(2M-r)} \sin^2
\theta }
\end{array} \right),
\end{equation}
then
\begin{equation} \label{Zg}
    z_g^{(1)} = \left\{ (t,r,\theta,\phi) \; | \; r < 2M\, , \, \cot \theta > 
    \sqrt{\frac{2M-r}{r}}= \frac{2M-r}{\sqrt{r (2M-r)}} \right\},
\end{equation}
and 
\begin{equation}
    z_g^{(2)} = \left\{ (t,r,\theta,\phi) \; | \; r < 2M\, , \, \cot \theta > \frac{M-r}{\sqrt{r (2M-r)}} \right\}
\end{equation}
Note that $z_g^{(2)}$  contains strictly 
the set $z_g=z^{(1)}_g$. 
This leaves the possibility of finding TLs within 
the CTS-forbidden set $z_g^{(2)}$. 
A look at \eqref{gab} suggests that we consider loops with tangent vectors 
$\propto \p_\phi$ 
at the tangential contact point with a $g$-level set. The simplest choice 
are the parallels $t=t_o,r=r_o< 2M, \theta=\theta_o$. 
For these loops  we find  
\begin{equation}\label{hp}
    H = \frac{r_o-2M}{r_o^2} \; \p_r + \frac{\cot (\theta_o)}{r_o^2}\; \p_\theta, 
\end{equation}
which is future timelike if 
\begin{equation} \label{tp0}
    \cot^2(\theta_o) < \frac{2M-r}{r}.
\end{equation}
Note that  condition \eqref{tp0}  is indeed equivalent to the requisite that  the parallel lies  outside $z_g^{(1)}$ (see 
equation \eqref{Zg}), and 
can be recasted as 
\begin{equation} \label{tp}
    \sin^2(\theta_o) > \frac{r}{2M}.
\end{equation}
Equation \eqref{tp} restricts trapped parallels to a strip near the Equator of the $(t_o,r_o)-$sphere. This strip  narrows as $r_o \to 2M^-$. 
\end{ex}

\begin{ex}  Consider the case of a  generic spherical collapse
 spacetime with $dr \neq 0$  ($r$ the area radius). 
The spacetime metric is  \cite{Bengtsson:2010tj}
\begin{equation} \label{sss}
ds^2= -e^{2 \b} \left( 1 - \frac{2m(v,r)}{r}\right) \; dv^2 +
2 e^\b \; dv \, dr + r^2 \left(  d\theta^2 + \sin^2 \theta \; d\phi^2 \right),
\end{equation}
where $v$ labels incoming radial null geodesics and 
\begin{equation}
m(v,r) = \frac{r}{2} \left( 1 - \nabla_a r \nabla^a r \right). 
\end{equation}
The time orientation is such that  the null vector field $-\p_r$ as future. 
Section VIII in \cite{Bengtsson:2010tj} prescribes conditions on $\beta(v,r)$ 
and $m(v,r)$ in order that  \eqref{sss} describes  an imploding spherically symmetric 
spacetime satisfying the dominant energy conditions, having a complete future 
null infinity $\mathcal{I}^+$ and leading to the 
formation of a black hole $\mathcal{B}= M - J^- (\mathcal{I}^+) \neq \emptyset$. 
In particular, $m(v,r)$ should be a non-trivial, 
non negative, bounded function. Note that Vaidya spacetime corresponds to the choice $\beta(v,r)=0$ and $m(v,r)=m(v)$, and that 
the family of metrics \eqref{sss} also admits static solutions such as Reissner-Nordström's spacetime, which corresponds to the choice $\beta=0$ and $m(v,r)= M-Q^2/(2r)$. \\

A calculation of the MCVF for the spheres defined by $v=v_o$, $r=r_o$ gives $H_a dx^a = (2/r_o) dr$, 
then $H^a H_a= -4 r^{-3} (2m(v_o,r_o)-r_o)$, so the trapped spheres are those for which
\begin{equation}\label{tsr}
f(v,r):=1 -\frac{2m(v,r)}{r} < 0.
\end{equation}
Note that, in the general case, unlike the Vaidya case, the boundary $r=2m(v,r)$ of the trapped sphere  region \eqref{tsr}, which is  called \textit{apparent horizon},  
has multiple connected components. Following \cite{Bengtsson:2010tj} we call $AH_1$ the component that  matches  the event horizon (possibly asymptotically to the future). 
 The proof  in \cite{Bengtsson:2010tj} of the existence of a past barrier for CTSs 
for the metric \eqref{sss} uses that:
\begin{enumerate}[i)]
\item There exists a \textit{time function} $\tau$ 
(that is, $-\nabla^a \tau$ is future timelike) such that \eqref{c1} 
holds for $g=-\tau$ and \emph{two dimensional subspaces}   of the tangent spaces  of $g-$level sets, 
in the open set $Q$ bounded by $AH_1$  and the event horizon (so that we may use $Z^{(2)}_g=Q$ in Theorem \ref{thm1}); 
\item  CTSs are 
restricted to the black hole region $M \setminus J^-(\mathcal{I}^+)$ (Proposition 12.2 in 
\cite{wald}). 
\end{enumerate}
These two facts are combined to show that  CTSs are 
not allowed to 
the past of the level set $\Sigma_{\tau_o} \subset Q$ 
that  meets the black hole event horizon (possibly asymptotically) to the future. 
Since the barrier $\Sigma_{\tau_o}$ lies outside  the component  $AH_1$ of the apparent horizon, 
the possibility 
 that a non spherically symmetric CTS gets past 
$AH_1$ is left open, but CTSs are forbidden  past $\Sigma_{\tau_o}$. 
Examples, in a Vaydia spacetime, of axially symmetric  CTSs beyond the apparent horizon, even entering the flat region,
  were obtained  numerically in \cite{x} (see also \cite{example}). The barrier $\Sigma_{\tau_o}$ is, of course, never crossed. 

The function  $g:=-\tau$ in \cite{Bengtsson:2010tj} 
is defined using the fact that 
$\zeta= \p_v$ is future timelike in the region $r>2m(v,r)$  (of which $Q$ is a subset) and 
hypersurface orthogonal  (since $\zeta_{[a} 
\nabla_{b} \zeta_{c]}=0$),  so that there are scalar fields $\alpha>0$ and $g$ 
such that 
\begin{equation} \label{if}
\zeta_a = \alpha \nabla_a g.
\end{equation}
We do not need to find the integrating factor $\a$, we may simply use 
 Theorem \ref{thm1a} after checking   that the 
hypersurfaces orthogonal to $\zeta^a$ are 2-future convex and so can be used to construct past barriers for CTSs 
(that is, no CTS can be tangent to  such a hypersurface from its future side). The existence of a CTS 
past barrier for \eqref{sss} implies that the MOTS found in  Vaidya black holes  close to the event horizon (reference 
\cite{bd}) 
must have sectors with a positive inner expansion. 
Back to the general case \eqref{sss}, an immediate  natural question is whether the CTSs barriers 
 also work for TLs, or if TLs could get further into the past.  
Answering this question   requires checking  $1-$future convexity  
for  a hypersurface $\Sigma$  orthogonal to $\zeta^a$ 
in $Q$. 
To this end  we use the following orthonormal basis of $\Sigma$:
\begin{equation}
e_1 =e^{-\beta} f^{-1/2} \p_v + f^{1/2} \p_r,  \;
\; e_2= \frac{1}{r} \p_\theta, \; \; e_3 = \frac{1}{r \sin \theta} \p_\phi.
\end{equation}
In this basis 
\begin{equation} \label{ssct}
(\nabla_a \zeta_b) \; e^a _i e^b_j = \text{diag}\left( (rf)^{-1} \p_v m, 0, 0 \right).
\end{equation}
In the region $Q$ of interest,  it is argued in \cite{Bengtsson:2010tj} that 
$\p_v m \geq 0$. Since in this region also $f>0$, it follows from \eqref{ssct} that the barrier constructed 
in \cite{Bengtsson:2010tj} for CTS is $1-$future convex and then works also as a barrier for TLs. We conclude that  in the spherical collapse model \eqref{sss} 
it is not possible that a TL 
advances further into the past than any CTS. This rules out the  possibility of having  TLs as an earlier sign of 
the development of a black hole in this case. \\
\end{ex}

\begin{ex}[Extreme Reissner-Nordström] \label{ern} 
Reissner-Nordström metric is obtained by taking  $\beta=0$ and $m(v,r)=M - M/(2r)$ in \eqref{sss}. In this case  $f$ defined in \eqref{tsr} is nonnegative in the entire domain and 
strictly positive within the black hole, $0<r<M$. The black hole region is foliated by the spacelike hypersurtfaces orthogonal  to $\zeta=\p_v$, which  is future timelike. 
According to \eqref{ssct} these hypersurfaces are $1-$future convex. We conclude that neither CTSs nor TLs can be found 
within an extreme Reissner-Nordström black hole. 
\end{ex}

\begin{ex}[Kerr black hole] \label{ExKerr}
In advanced coordinates, Kerr's vacuum solution of Einstein's equations reads
\begin{multline}\label{adv}
ds^2 =  -\left(1- \frac{2Mr}{\rho^2} \right) dv^2+\rho^2 d\theta^2 + \left[ r^2+a^2+\frac{2Mr a^2 \sin^2\theta}{\rho^2} \right] \sin^2\theta  d\varphi^2\\
-\frac{4Mar \sin^2 \theta}{\rho^2} dv \, d\varphi + 2 \, dv \,dr -2 a \sin^2 \theta \ dr \, d\varphi. 
\end{multline}
Here $ \rho^2 = r^2+a^2 \cos^2 \theta$ and  $(\theta,\phi)$ are the standard coordinates of $\rm{S}^2$. We may assume that  $0<a \leq m$, 
as if $a$ were negative we could recover the form \eqref{adv} with a positive $a$ 
by changing the coordinate $\phi \to \phi'=-\phi$. The domain of the remaining  coordinates is   $-\infty <v,r< \infty$. 
 The horizons are located at 
 \begin{equation} \label{hor}
 r_I=m-\sqrt{m^2-a^2}, \;\;\; r_O=m-\sqrt{m^2-a^2}.
 \end{equation}
 This relation  can be inverted (recall we assumed $a>0$) to parametrize the metric in terms of the horizon positions
 \begin{equation} \label{am}
  a=\sqrt{r_I r_O}, \;\;\; m=\frac{1}{2}(r_I+r_O). 
  \end{equation}
 Note that in the extreme case $a=m$,
 \begin{equation} \label{exc}
 r_I=r_O=m \;\;\; (a=m).
 \end{equation}
  The metric \eqref{adv} is  well defined and time oriented for $r \in \mathbb{R}$. The time orientation is such that 
the nowhere zero null vector field $O=-\p_r$
 is future pointing. Note, however, that allowing the $r<0$ region introduces closed timelike curves \textit{through any point 
 in the inner region $r<r_I$} (\cite{kbh}, section 2.4).
 
 The submanifolds $v=v_o$, $r=r_o>0$ are 2-spheres with a non standard spacelike metric 
 if $r_o \not \in [r_*,0]$, where $r_*$ is the only  real root 
of $r^3+a^2r+2ma^2 =0$. For $r_o$ in this interval, 
  the metric induced on the spheres 
 has Lorentzian/degenerate  sectors due to $\p_\phi$ becoming timelike/null (which is also the mechanism allowing 
 closed timelike curves). In what follows, we restrict \eqref{adv}  to $r>0$ and disregard the $r\leq 0$ region that allows closed timelike curves and contains Lorentzian spheres. \\
 
 The MCVF of the $(v_o,r_o)$  spacelike spheres is 
 $ H_a dx^a=H_r dr$ with 
  \begin{equation}
 H_r = \frac{ 2 r_o^{3} +a^{2} r_o (1+\cos^2 \theta)+a^{2} m \sin^2\theta}{\left(a^{4}+a^{2} r_o^{2}\right) \cos^{2}\theta
 +2  m \,a^{2} r_o \sin^2 \theta+a^{2} r_o^{2}+r_o^{4}} 
 \end{equation}
 Since $H_r>0$ (then $\langle H, -\p_r \rangle <0$) and 
 \begin{equation}
 g^{rr}= \frac{a^2-2mr_o+r_o^2}{r_o^2+a^2 \cos^2 \theta},
 \end{equation}
it follows that  $\text{sgn}(H^a H_a)=\text{sgn}(g^{rr})$, so the spacelike spheres are trapped iff $r_I<r_o<r_O$ and marginally trapped at the horizons 
(with $H^a$ future null outer pointing, see section \ref{SSzoo}). These observations hold also for the extreme case $r_O=m=r_I$. \\

We are interested in knowing if the behavior of the $(r_o,v_o)$  spheres signals the absence of more general CTSs in the region $\mathcal{Z}$ defined, for both the sub-extreme 
and extreme  $r_I=r_O$ cases   by the condition $0<r<r_I$.   If so, we  would like to know if TLs are  also 
forbidden in $\mathcal{Z}$, in which case, extreme Kerr black holes would not exhibit any ``trapped submanifold phenomenology''. 
To approach this problem we consider the following four solutions of the eikonal equation $g^{a b} \nabla_a \nabla_b g=0$ (this is equation (49) in \cite{Dotti}, 
  $s_1=\pm1$ and $s_2=\pm 1$ are independent signs):
\begin{equation} \label{g}
g = -v+ \int \frac{a^2+r^2}{r^2-2mr+a^2} \; dr + s_1 \int \frac{\sqrt{r^4+a^2 r^2+2 a^2mr}}{r^2-2mr+a^2} \; dr + s_2 \; a \sin \theta.
\end{equation}
Since $-\p_r$ is future oriented, the null vector $\nabla^a g$ will be future in the open set defined by the condition $\p_r g>0$. 
If $s_1=-1$, the combined integral in $r$ is well defined across the horizons through the entire $r>0$ domain and $\nabla^a g$ is future everywhere. 
If $s_1=1$, $g$  diverges  at the horizon/s and $\nabla^a g$ is past  for $r_I<r<r_O$ and  future  elsewhere. 
Since for both 
$s_1=\pm1$ we found that $\nabla^a g$  is future null in $\mathcal{Z}$,  $g$ is  in principle suitable to be used in Theorem \ref{thm1} in this region, for any combination $(s_1,s_2)$. 
To proceed, we need to check if its level sets satisfy any of the $k-$future convex conditions. \\
In determining the 2-future convex condition for the $g$ level sets, Proposition \ref{pbox} saves us some calculations. 
We find 
\begin{equation} \label{bxuk}
\Box g = \frac{2 s_2 \sqrt{X r_I \, r_O\,  r}\,  \left(\frac{2 \cos^2 \theta-1}{\sin \theta}\right) + s_1 \left(X+r_I\, r_O\,  r+3r^{3}\right) }{\sqrt{Xr}\, \left(2 r_I \, r_O \, \cos^{2}\theta+2 r^{2}\right)}, 
\end{equation}
where 
\begin{equation}
 X=r_I^{2}\,  r_O +r_O^2\,  r_I+r_O \, r_I\,  r+r^{3} 
\end{equation}
and  conclude  that for  $(s_1,s_2)=(1,1)$ (and only in this case), $\Box g >0$ in $\mathcal{Z}$. This implies that the level sets of this function are 2-future convex and  CTSs  
are not allowed in this region. To analyze the 
 $1-$future convex condition we cannot avoid going through 
the calculations  as in the previous examples. We start by picking  any two linearly independent spacelike vector fields orthogonal to $\nabla^a g$ to define a spacelike section 
$V_o$ at every point of the null level sets of $g$. We chose  $v_1 = s_2\;  a \cos \theta \; \p_v + \p_\theta$ and 
the unit vector field $e_2= \langle \p_\phi, \p_\phi\rangle^{-1/2} \p_\phi$. We then  apply Gram-Schmidt and define $e_1$ as the normalized 
vector along $v_1 - \langle v_1,e_2\rangle e_2$. Since the basis $\{ e_1, e_2 \}$ is orthonormal, 
\begin{equation} \label{tde}\begin{split} 
\lambda_1 + \lambda_2 &= {\rm tr} (h_o^{-1} K_o) =u_{11}+u_{22},\\
\lambda_1 \lambda_2 &={\rm det}(h_o^{-1} K_o) =u_{11}u_{22}-(u_{12})^2,
\end{split}
\end{equation}
 where $u_{ij} = (\nabla_a \nabla_b g) e^a_i e^b_j$.  
The trace \eqref{tde} gives back \eqref{bxuk}, as expected from equation \eqref{div}. 
The trace and determinant    functions are graphed in Figure \ref{trdet} for   particular  inner/outer horizon radii and $(s_1,s_2)=(1,1)$, which is the only 
sign choice  for which the $g$ level sets are $2-$future convex in $\mathcal{Z}$. The qualitative behavior for other horizon values and for the extreme case is 
the same: there is a large open central region $\mathcal{Z}' \subset \mathcal{Z}$ where the determinant is negative. This means that 
$\lambda_1<0$ there and the 1-future convex condition is not satisfied. Therefore,  the existence of TLs  cannot be discarded using this foliation. \\

\begin{figure}
\begin{subfigure}{.5\textwidth}
    \centering
    \includegraphics[width=.95\linewidth]{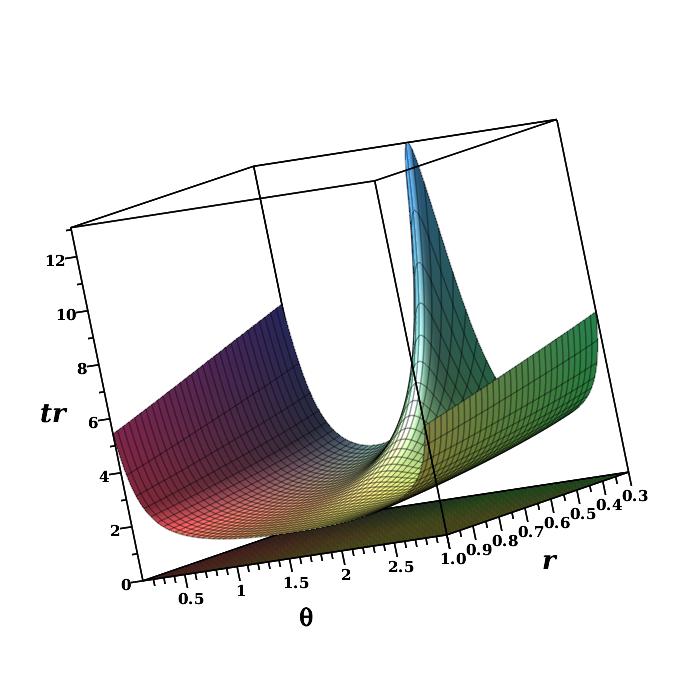}
   % \caption{Caption for first subfigure}
\end{subfigure}%
\begin{subfigure}{.5\textwidth}
    \centering
    \includegraphics[width=.95\linewidth]{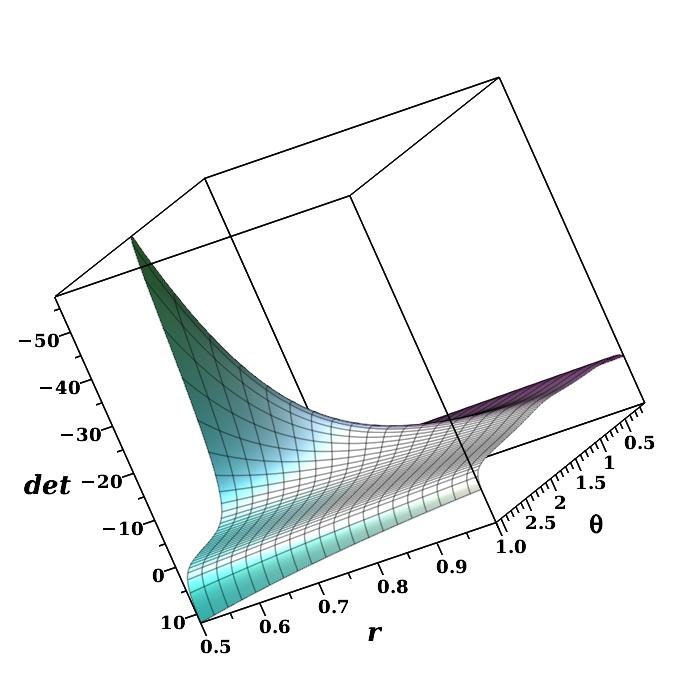}
%    \caption{Caption for second subfigure}
\end{subfigure}
\captionsetup{singlelinecheck=false} 
\caption{The trace $\lambda_1+\lambda_2= \Box g$ and the determinant $\lambda_1 \lambda_2$ for the operator 
$h_o^{-1} K_o: V_o \to V_o$  on a spacelike section of $V_o \subset T_p \Sigma$ ($\Sigma$ a level set of $g$), 
graphed in the region $\mathcal{Z}$ defined by $0<r<r_I$. The example shown corresponds to $s_1=s_2=1$, which is the only one having positive trace. 
We have taken $r_I=1$ and $r_O=2$, the graphs look qualitatively similar for other horizon values and in the extreme case. 
At points in the  central open set $\mathcal{Z}'$  where the determinant is  negative, the $g$ level sets are 2-future convex but not 1-future convex.
Trapped loops are not ruled out in $\mathcal{Z}'$ by this foliation.  \label{trdet}
}
\end{figure}

A comment should be made regarding the solution $g$ of the eikonal equation in \eqref{g}: the function $\sin(\theta)$ is continuous in $\rm{S}^2$ 
but fails to be differentiable at the poles. As a consequence, $g$ is not differentiable on the Kerr axis $\mathcal{A}$. This explains the divergences  in \eqref{bxuk},
\begin{equation}
\Box g \simeq  \begin{cases} \frac{\sqrt{r_I r_O}}{r_I r_O +r^2} \; \theta^{-1} &, \theta \gtrsim 0 \\
\frac{\sqrt{r_I r_O}}{r_I r_O +r^2} \; (\pi -\theta)^{-1} &, \theta \lesssim \pi
\end{cases}
\end{equation}
that made us restrict the $\theta$ range in the plots in Fig. \ref{trdet}. The implication of this fact is that, a priori, Theorem \ref{thm1} only forbids CTS to be included in 
$\mathcal{Z} \setminus \mathcal{A}$. Could a CTS $S$ still exist in $\mathcal{Z}$? If so, $S$ would  intersect $\mathcal{A}$. 
Let $p \in S$ be a point where the continuous function  $g|_S$ reaches a global maximum. If $p \in \mathcal{A} \cap S$, 
any neighborhood $O \subset S$ of $p$  contains a point $q$  with $\sin(\theta)>0$ [otherwise, there would be  an open  neighborhood $\tilde O$ of $p$ in $S$ 
such that $\tilde O \subset ( S \cap  \mathcal{A} )$, but this is not 
possible since the induced  metric in $S \cap \tilde O$ would be Lorentzian] and  then $g(q) >g(p)$ since $s_2=1$ and $a>0$  (see \eqref{g}, recall that 
our conventions include the coordinate $\phi$ was chosen such that $a>0$, equation \eqref{am}), 
so that $p$ could not be a local maximum of $g|_S$. Thus,  $p \not \in \mathcal{A}$,  then Theorem \ref{thm1} (essentially, equation 
\eqref{c2}) can be used to conclude that the MCVF could not be future timelike at $p$, and therefore $S$ is not trapped. 
\end{ex}

\begin{ex}[Null hypersurfaces containing a stable MOTS] Marginally outer trapped surfaces (MOTS) 
and weakly  outer trapped surfaces (WOTS)  where introduced in section \ref{SSzoo}. 
From Remark \ref{marginal} we know that   such 
surfaces cannot exist in regions of 3+1 spacetimes foliated by strictly 2 future convex null 
hypersurfaces if their generators are outer pointing. As a trivial example, consider 
a compact surface $S$ in 3+1 dimensional Minkowski spacetime. Take a system 
of inertial coordinates  with the $t$ axis avoiding $S$ and define $g=r-t$ $(r=\sqrt{x^2+y^2+z^2}$). 
We have $\Box g=2/r$, so that, according to Proposition 
\ref{pbox},   the $g$ level sets are strictly 2-future convex foliation 
of the open set $r>0$. 
Furthermore, $\nabla g = \p_t +\p_r$ is outer pointing (in the standard ``outer'' definition), then 
$S$ cannot be a WOTS (then a MOTS). \\

  The concept of MOTS stability, related to how $\theta_+$ changes as we deform the surface was  first considered in 
  \cite{newman}. In what follows 
we use the definitions and results in  the Letter \cite{Andersson2005} and its companion  article \cite{Andersson:2007}. 
Attention here is restricted to 3+1 spacetimes. 
What follows  is a rephrasing of (part of) Theorem 7.1 in \cite{Andersson:2007}. 
 It uses the concept of WOTS given in section \ref{SSzoo} ($H_a  \ell_+^a \leq 0$), 
 of which CTS is a subclass. 
The definition of \textit{strictly stably outermost} MOTS is given 
in  \cite{Andersson2005, Andersson:2007} (see Definition 5.1 and Proposition 5.1 in \cite{Andersson:2007}.)\\

\noindent
\textbf{Theorem} [Andersson, Mars and Simon] \cite{Andersson2005,Andersson:2007}.  
    Assume $S$ is a strictly stably outermost MOTS in a hypersurface $\Sigma$. There is a two sided neighborhood $S \subset U \subset \Sigma$ such that no WOTS  in 
    $U$ enters the exterior of $S$.\\

A sketch of the proof follows: 
    by the strictly stably outermost assumption, there is a vector field $v \in \frak{X}(S)^\perp$ tangent to $\Sigma$ and 
    a positive function $\phi: S \to \mathbb{R}$ such that,  flowing $S$ 
    along any extension  tangent to $\Sigma$ of $\phi v$, produces a family $S_t$ 
    with future  outer null normals $\ell_t^a$ 
   which,
    for some positive $\e$, have negative expansion for $t \in (-\e,0)$ and 
   positive expansion for $t \in (0,\e)$. As in \cite{Andersson:2007}, we define 
    $U=\cup_{t \in (-\e,\e)}S_t$. \\
    At this point we depart from the proof in \cite{Andersson:2007} : 
     let $O \supset U$ be an open subset 
    \textit{of the spacetime $M$} obtained by taking the union of  open geodesic segments 
    through $U$  with initial condition the $\ell_t^a$'s. If the geodesic segments are short enough, 
    the function  $g: O \to \mathbb{R}$ 
    given by $g(p)= t$ if $p$ lies in a geodesic through $S_t$  is well defined. 
    Its level sets are null hypersurfaces with
    $\nabla^a g =: \ell^a$  tangent to their generators 
    and satisfying $\ell^a|_{S_t} = \alpha_t \;  \ell_t^a$, 
    $\alpha_t$ 
    a positive scalar function on $S_t$. 
    Their  divergence  $\theta_+= \Box g$  has
     the  sign of $t$ at $\Sigma$ and may  decrease to the future 
     along  the geodesics if the null energy 
     condition \eqref{nec} is satisfied. We define the exterior [interior] subset 
       $O_e$ [$O_i$] of $O$ 
     by the condition $\Box g>0$ [$\Box g<0$] and  $N_S \subset O$ as the  hypersurface 
where    $\Box g=0$. 
     Note that $O$ is a two sided neighborhood of $N_S$ and 
     that   $U_e:=O_e \cap \Sigma=\cup_{t \in (0,\e)}
     S_t$ 
     and $U_i:=O_i \cap \Sigma=\cup_{t \in (-\e,0)}S_t$. 
    In view of Proposition \ref{pbox}, 
    the $g-$level set foliation of $O_e$ is strictly 
    2-future convex, then from Theorem \ref{thm1}.i no 
    CTS  $S \subset O$ could enter  $O_e$,  as $g|_S$  would reach a local maximum in $O_e$.  
    The stronger statement that no WOTS in the two sided neighborhood $O$ of $N_S$ 
    enters the exterior $O_e$ follows 
 from Remark \ref{marginal}    and Proposition \ref{pbox}: 
    for WOTS $H_a  \ell_+^a \leq 0$, 
    and the  sign contradiction in equation  \eqref{c2} 
used to prove Theorem \ref{thm1} holds in this case because of strict convexity and     
the fact  that, at a local maximum of $g|_S$, $\nabla_a g$ is 
not any future null normal orthogonal to $S$: as in the Minkowski example above, 
 it points along  the  \emph{outer} 
future null direction, as defined by the two sided neighborhood $O$ of $N_S$ in a way consistent 
with the  choice 
made on $\Sigma$. 
 The statement in the theorem of Andersson, Mars and Simon's  that no WOTS  $S \subset U$
  intersects $U_e$ now follows as a particular case. 
\end{ex}

\section{Conclusions}

The concept of  $k-$future convex  spacelike or null 
hypersurface $\Sigma^n$ of an $n+1$ dimensional spacetime $M^{n+1}$ is introduced. 
If $\Sigma$ is spacelike, it corresponds to the case where  the average of the $k$ lowest principal 
curvatures, assuming a future normal,  is nonnegative. 
Whereas the $1-$future convex condition agrees with the standard notion of local convexity and the $n-$future convex condition is equivalent to having a  nonnegative mean curvature,  
the intermediate cases $1<k<n$  do not seem to have been used in other contexts. For null hypersurfaces (or null sectors in otherwise 
spacelike hypersurfaces), $k-$future convexity is  defined  using  a spacelike $n-1$
 dimensional section
 of the tangent 
space.\\
The relevance $k-$future convex  spacelike/null hypersurfaces  lies in the fact that no $k$ dimensional closed trapped submanifold (k-CTM)
 can intersect them tangentially 
from its future side  (Theorem \ref{thm1a}). In particular, if an open subset $O \subset M$ admits a foliation by $k-$future 
convex spacelike/null hypersurfaces, it is not possible for 
a  k-CTM $S$ that  $S \subset O$.  Using this result in $3+1$ dimensions and 
 appropriate foliations, we prove that there are no closed trapped surfaces (CTS, $k=2$)
in the inner region $0<r<m$ of an extremal ($a=m$ in \eqref{adv}) Kerr black hole (example \ref{ExKerr} in section \ref{Sapps}) 
and that there are neither CTS nor trapped loops (TLs, $k=1$) within an extreme Reissner-
Nordström black hole (example \ref{ern} in section \ref{Sapps}). 
We also confirm the expectation that, in sub-extreme Kerr spacetime (equation \eqref{adv}, Example \ref{ExKerr} in section \ref{Sapps}), 
CTSs are confined within the region between horizons, showing that the spheres $v=v_o$, $r=r_o$ are paradigmatic. TLs outside this 
region are not ruled out by the foliation we found, which is 2-future convex but not 1-future convex. \\
We expect that CTMs of any dimension less than $n$ will  generically exist in stationary 
non extremal as well as in 
dynamical black hole regions of $n+1$ dimensional spacetimes. 
These predict future  incompleteness of 
null geodesics if appropriate additional conditions are met (see section \ref{Sintro}, 
where the singularity theorem in \cite{Gallo} is reproduced). 
 The absence  of $k-$future convex  foliations 
may then be considered as an alternative   geometric characterization    of  black hole interiors. \\
Example 6 in section \ref{Sapps} explores   3+1 dimensional spherical gravitational collapse  leading to black hole formation, equation \eqref {sss}.
This model, which contains Vaidya's as a special case, was  considered in 
  \cite{Bengtsson:2010tj}, where it was proved that, although CTSs are allowed past the apparent horizon, there exists  
 a past barrier that cannot be crossed by CTSs. Here we prove that, since this
  past barrier is 1-future convex, it also works as a TL barrier, 
 ruling out the interesting possibility that TLs appear as an earlier sign  of black hole formation. 
  The existence of TLs reaching zones past the CTS region in non spherically symmetric black hole formation scenarios 
 is an interesting question that remains open. \\
  We also analyzed a few cosmological models: FLRW  in arbitrary dimensions admit no k-CTMs (any $k$) in expansion eras, 
 whereas some  compactified 3+1 Kasner models admit TLs but no CTS (see examples 2 and 3 in section \ref{Sapps}). 
 \\ 
 The results in Theorems 1 and 2 can be used to rule out WOTS and MOTS if we restrict them to 
 null foliations and add the requirement that the future generators of their leaves be outer pointing.
 This is used in example 9 in Section \ref{Sapps} to reproduce and slightly generalize a result 
 proved in \cite{Andersson2005, Andersson:2007}. \\
 
 All the examples mentioned  above were studied in connection  with the 
 conceptual issue of  existence of CTMs of different dimensions in spacetimes we know in its entirety. 
 There is also the numerical relativity problem -mentioned in the Introduction- of  having 
 a partial solution of Einstein's equation that intersects 
a black hole region in an open set where, because of the shape of the intersection, 
 trapped surfaces cannot close. An example of such a situation for a Schwarzschild black hole was  given in \cite{wi} and 
 explained in \cite{Dotti} in terms of the existence of a $2-$future convex null foliations. Here we prove (example 5 in section \ref{Sapps}) 
  that TLs are present in the partial solutions given  in \cite{wi} and 
could be used in this numerical relativity scheme to realize that a   black hole is being formed in spite of the non existence of CTSs. This example proves the usefulness 
of  numerically searching for TLs besides 
CTs in situations of gravitational collapse. 

\section{Acknowledgements}
I thank Marc Mars for pointing out a wrong statement
 in  the application of the results to stable MOTS 
in a previous version of the manuscript. I also thank two anonymous referees for their 
comments, 
as they led to an improved presentation.

\end{document}